\begin{document}

	\title{Closeness and Vertex Residual Closeness of Harary Graphs}

	\author{Hande Tuncel Golpek\thanks{Address for correspondence:  Dokuz Eylul University, Maritime Faculty,
                     Buca Izmir Turkey.}\thanks{The authors
                    are grateful to the anonymous referees for their comments and suggestions.
                     \newline \newline
                    \vspace*{-6mm}{\scriptsize{Received August 2023; \ accepted April 2024.}}}
                    \\
		Maritime Faculty \\
		Dokuz Eylul University\\
        Tinaztepe Campus, Buca, Izmir, Turkey\\
		hande.tuncel@deu.edu.tr
		\and Aysun Aytac$^\dag$
         \\
		Department of Mathematics \\
		Ege University \\
		Bornova, Izmir, Turkey }
	
	\maketitle
	
	\runninghead{H.T. Golpek and A. Aytac}{Closeness and Vertex Residual Closeness of Harary Graphs}
	
	\begin{abstract}
		Analysis of a network in terms of vulnerability is one of the most significant problems. Graph theory serves as a valuable tool for solving complex network problems, and there exist numerous graph-theoretic parameters to analyze the system's stability. Among these parameters, the closeness parameter stands out as one of the most commonly used vulnerability metrics. Its definition has evolved to enhance the ease of formulation and applicability to disconnected structures. Furthermore, based on the closeness parameter, vertex residual closeness, which is a newer and more sensitive parameter compared to other existing parameters, has been introduced as a new graph vulnerability index by Dangalchev. In this study, the outcomes of the closeness and vertex residual closeness parameters in Harary Graphs have been examined. Harary Graphs are well-known constructs that are distinguished by having $n$ vertices that are $k$-connected with the least possible number of edges.
	\end{abstract}
	
	\begin{keywords}
	Graph vulnerability, closeness, vertex residual closeness.
	\end{keywords}

\section{Introduction}

Network analysis is one of the most important problems in the development of
computer science. The key point in network analysis is to determine
centrality. In other words, it is to determine which node has a critical place in the network. This is directly related to the reliability of the
structure. Graph theory has many solution techniques and approaches in this
regard. There are many studies on this analysis in the literature, and these
studies provide links between graph theory and computer science and its
applications. Therefore, graph theory plays an important role in the
analysis of the robustness of networks.

There are many graph-theoretic parameters related to the analysis of complex
networks. Among the parameters associated with centrality, closeness stands out and has undergone various interpretations over the years. The initial closeness approach, when first introduced, could not provide an interpretation for disconnected structures \cite{Freeman}. However, later, through the concept introduced by Latora and Marchiori \cite{LatoraMarchiori}, the possibility of its application to disconnected structures emerged. Building on these definitions, in \cite{Dangalchev}, Dangalchev proposed a formulae for closeness. This approach provides ease in terms of formulation. In this paper, we will utilize them. Dangalchev's useful closeness formula for vertex $i$ is $C(i)=\sum\limits_{j\neq i}\frac{1}{2^{d(i,j)}}$ where $d(i,j)$ represents the distance between vertices $i$ and $j$. Additionally, the concept of residual closeness, a more sensitive parameter based on this definition of closeness, also emerged simultaneously from Dangalchev again \cite{Dangalchev}. The key point here is to determine how the removal of a vertex from the graph impacts the graph's vulnerability. To calculate the closeness value after the removal of vertex $k$, denoted as $C_{k}$, we use the equation  $C_{k}=\sum\limits_{i}\sum\limits_{j\neq i}\frac{1}{2^{d_{k}(i,j)}}$ where $%
d_{k}(i,j)$ is the distance between vertices $i$ and $j$ after the removal of vertex $k$. Moreover, the vertex residual closeness, denoted as $R,$ is defined as $R=%
\underset{k}{\min }\{C_{k}\}$. We refer readers to \cite{AytacOdabas1,AytacOdabas2,AytacOdabas3,AytacOdabas4,BerberlerYigit,AytacVTuraci,Dangalchev,Dangalchev2,Dangalchev3,Dangalchev4,OdabasAytac,
TufanAytac,TufanAytacV,TuraciOkten} for more detailed information and advanced knowledge about closeness and residual closeness parameters. Both closeness and residual closeness
parameters are based on the concept of distance in graphs, and these
measurements are usable in disconnected structures. The term residual closeness mentioned throughout this study should be understood as vertex residual closeness.

For notation and terminology, we utilized two books \cite{Chartrand} and \cite%
{Harary}. In this paper, the graph $G$ is considered a simple, finite, and
undirected graph with vertex set $V(G)$ and edge set $E(G)$. The \textit{%
open neighborhood} of any vertex in $V(G)$, denoted by $N(v)$ $=\{u\in V|$ $%
(uv)\in E(G)\}$. \textit{The degree of a vertex} $v$ denoted by $deg(v)$, is the number of its neighborhood. The distance between two vertices $u$ and $v$ is several edges in the shortest path between them (also it is called $u$-$v$ geodesic),
abbreviated by $d(u,v)$. In addition, three of the most important parameters
of distance in the graph are eccentricity, diameter, and radius. In a
connected graph, \textit{the eccentricity,} $\epsilon (v)$, of a vertex $v$
is the greatest distance between $v$ and any other vertex. \textit{The
diameter }of a graph $G$ is the maximum eccentricity of any vertex in the
graph. It can be symbolized by $diam(G)=\max\limits_{v\in V(G)}\epsilon
(v)=\max\limits_{v\in V(G)}\max\limits_{u\in V(G)}d(u,v)$ and the \textit{%
radius of a graph} $G$, $rad(G)$, is the minimum eccentricity of any vertex.

\medskip
As the number of edges increases, so does the connectivity. This can have negative economic implications in practical applications. This circumstance has had an impact on the emergence of the Harary Graph. Harary Graph concept was introduced in 1962 by F. Harary \cite{Harary62}.
It is a k-connected graph on $n$ vertices that has a degree of at least $k$
and $\lceil\frac{kn}{2}\rceil$ edges. Harary Graph structure constructs in three cases depending on $k$ and $n$ parities. For $2<k<n$, the Harary Graph denoted by $H_{k,n}$ on $n$ vertices
is defined as follows using the expression of West
\cite{West}. For $n$ equally spaced vertices are placed around a
circle. If $k$ is even, $H_{k,n}$ is formed by joining each vertex to the
nearest $k/2$ vertices in each direction around the circle. If $k$ is odd
and $n$ is even, $H_{k,n}$ is obtained by joining each vertex to the nearest
$(k-1)/2$ vertices in a clockwise and counterclockwise direction around the
circle and to the \textit{diametrically opposite} vertex $(n/2)$. For these
two cases, $H_{k,n}$ is $k-$regular and there is an automorphism between
any two vertices $u,v\in V.$ In other words, the graph is vertex-transitive. If both $k$ and $n$ are odd, $H_{k,n}$ is structured as follows. The vertex set
of $H_{k,n}$ is labeled by $0,1,...,n-1$ and $H_{k,n}$ is formed from $%
H_{k-1,n}$ by joining vertex $i$ to vertex $i$ $+(n-1)/2$ for $0\leq i\leq
(n-1)/2$. According to this construction, vertex $(n-1)/2$ is adjacent to
both vertex 0 and vertex $n-1.$ Therefore, the graph is not vertex-transitive
for the case where both $k$ and $n$ are odd. In an Harary Graph, when $k$ is even, if we visit node $(i+k/2)$ or $(i-k/2)$ for $\pmod n$ from vertex $i$ , then it is called a \textit{city-tour}; when $k$ is odd, if we visit node $(i+(k-1)/2)$ or $(i-(k-1)/2)$ for $\pmod n$ from vertex $i$ , then it is called a \textit{city-tour}, otherwise it is called a \textit{village-tour} \cite{Tanna}. Also, the diameter of Harary Graphs $%
H_{k,n}$ is given as follows \cite{Tanna}.
\begin{itemize}
\item If $k$ is even then
\end{itemize}
$$ diam(H_{k,n})=
\begin{cases}
\left\lfloor \frac{n}{k}\right\rfloor, &\mbox{if } \text{$n\equiv 1\;(\bmod \ k) $ }\\
\\[-4pt]
\left\lceil \frac{n}{k}\right\rceil, & \mbox{ otherwise}%
\end{cases}%
$$
\begin{itemize}
\item Let $n$ be even and $k>3$ be odd then
\end{itemize}
$$diam(H_{k,n})=
\begin{cases}
\left\lceil \frac{n}{2k-2}\right\rceil $+1$ ,&  \text{\mbox{if} $diam(H_{k-1,n})$ is even}
\\&\quad%
\text{and $n\not\equiv 2\;(\bmod \ (k-1))$ }\\
\\[-4pt]
\left\lceil \frac{n}{2k-2}\right\rceil ,&\quad \mbox{ otherwise}%
\end{cases}%
$$
\begin{itemize}
\item Let $n$ be even and $k=3$ then
\end{itemize}
\begin{equation*}
diam(H_{3,n})=\left\lceil \frac{n}{4}\right\rceil
\end{equation*}
\begin{itemize}
	\item Let $n$ be odd and $k=3$ then
\end{itemize}
\begin{equation*}
	diam(H_{3,n})=\left\lceil \frac{n+1}{4}\right\rceil
\end{equation*}

\begin{itemize}
\item If both $n$ and $k>3$ are odd then
\end{itemize}
$$diam(H_{k,n})=
\begin{cases}
  \left\lceil \frac{n}{2k-2}\right\rceil +1 , & \mbox{if $(n-k-1)\equiv1\;(\bmod \ {2(k-1)})$}\\
  \\[-4pt]
  \left\lceil \frac{n}{2k-2}\right\rceil, & \mbox{otherwise}.
\end{cases}
$$

Now, we can define the Consecutive Circulant Graph as follows.
Let $[l]=\{1,2,...,l\}$. The Consecutive Circulant Graph, $C_{n},_{[\ell ]}$
is the set of $n$ vertices such that vertex $v$ is adjacent to vertex $v\pm
i $ \;$\bmod \  n$ for each $i\in \lbrack \ell ]$. Notice that $C_{n,[1]}$
is equivalent to $C_{n}$ and $C_{n},_{\lfloor n/2\rfloor }$ is equivalent to
the complete graph $K_{n}$. \cite{Consecutive}. In the next section, we
construct a link between the Harary Graph and the Consecutive Circulant Graph then we will provide the closeness value.

In this study, we begin by examining the closeness values of both Harary Graphs and their specific form known as Consecutive Circulant Graphs. Due to the form of Harary Graphs, this calculation discussed under three main thoughts according to whether $k$
and $n$ are odd or even numbers. Furthermore, under these main ideas, the
subcases of diameters greater than or less than $2$ are also considered. After calculating the closeness value of the Consecutive circulant
Graphs, the second part of the paper is completed. In the third part,
the vertex residual closeness of Harary Graphs is taken up. As was done when computing the closeness values, the cases were handled individually, taking into account the parity relationships between $k$ and $n$.

\medskip
Additionally, let us provide the geometric series sum formula and its derivative as preliminary information, which serves as a useful tool frequently used in proofs:\\
\begin{align}
	\sum\limits_{k=1}^{h}aX^{k-1}  & =a(X^{0}+...+X^{h-1})=\frac{a(1-X^{h})}%
	{1-X} \label{geo}\\
	\sum\limits_{k=2}^{h}a(k-1)X^{k-2}  & =a\cdot[\frac{hX^{h-1}}{X-1}-\frac{X^{h}%
		-1}{(X-1)^{2}}] \label{derigeo}.\vspace*{1mm}
\end{align}

\section{Closeness of Harary Graphs}

In this section, we evaluated the closeness value of a Harary Graph, denoted as $C(H_{k,n})$. Additionally, by examining the relationship between the Harary Graph and Consecutive Circulant Graph structures, we will reveal the connection between the closeness values.

\begin{theorem}
\label{kiseven}Let $H_{k,n}$ be a Harary Graph on $n$ vertices where $k$ is
even, then
$$ \small{ C(H_{k,n})=
\begin{cases}
  n(k+(\frac{1}{2})^{diam(H_{k,n})}(t-2k)) & ,\mbox{if } (n-1)\equiv t\;(\bmod \  k) \\
  n(k-(\frac{1}{2})^{diam(H_{k,n})}k) & ,\mbox{if} (n-1)\equiv 0 \;(\bmod \  k).
\end{cases}}
$$

\begin{proof}
Let the vertices of the graph be labeled as $\{0,1,...,(n-1)\}$. It should be
considered in two cases according to the value of $(n-1)$ in $(\bmod\ k)$. If $k$ is even then the graph is vertex-transitive, it refers to all vertices having the same closeness value. Therefore, without loss of generality, we can take vertex $0$ as the originator and evaluate the closeness value of vertex $0$.

\medskip
    If $(n-1)\equiv t (\bmod\ k),$ $t\neq 0$ then the distance
between $0$ and vertex $\frac{k}{2}\cdot(diam(H_{k,n})-1)$ and vertex $n-\frac{k%
}{2}\cdot(diam(H_{k,n})-1)$ at a clockwise and counterclockwise, respectively,
is $(diam(H_{k,n})-1)$. Thus, the remaining $t$ vertices from $0$ is $(
diam(H_{k,n})-1)$ city-tour plus $1$ unit or $diam(H_{k,n})$ city-tour
distance. Hence, there are $k$ vertices in the distance $1,2,...,(diam(H_{k,n})-1)$ from vertex $0$ and there are $t$ vertices at distance $diam(H_{k,n})=\left%
\lceil \frac{n}{k}\right\rceil .$ Thus, we get%

\begin{equation*}
C(0)=\sum\limits_{i=1}^{diam(H_{k,n})-1}k(\frac{1}{2})^{i}+(t)(\frac{1}{2}%
)^{diam(H_{k,n})}.
\end{equation*}%
Using geometric series formula (\ref{geo}), we have%
\begin{equation*}
C(0)=k+(\frac{1}{2})^{diam(H_{k,n})}(t-2k).
\end{equation*}%
Therefore, the closeness value of a graph with $n$ vertices where $n-1\equiv
t\;(\bmod \  k)$ is%
\begin{equation*}
C(H_{k,n})=n(k+(\frac{1}{2})^{diam(H_{k,n})}(t-2k)).
\end{equation*}

    If $(n-1)\equiv 0 \pmod k$ then there are $k$ vertices in
distance $1,2,...,diam(H_{k,n})$ from vertex $0$ where $diam(H_{k,n})=\left%
\lfloor \frac{n}{k}\right\rfloor $ for this case. The closeness formula for this
case for vertex $0$ can be obtained as%
\begin{equation*}
C(0)=\sum\limits_{i=1}^{diam(H_{k,n})}k(\frac{1}{2})^{i}.
\end{equation*}%
Using geometric series summation, we have%
\begin{equation*}
C(0)=k-(\frac{1}{2})^{diam(H_{k,n})}(k).
\end{equation*}%
Therefore, the closeness value of a graph with $n$ vertices where $(n-1)\equiv
0 \;(\bmod \  k) $ is%
\begin{equation*}
C(H_{k,n})=n(k-(\frac{1}{2})^{diam(H_{k,n})}k).
\end{equation*}%
Thus, the theorem holds. \end{proof}
\end{theorem}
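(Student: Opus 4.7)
The plan is to reduce the computation to a single closeness value via vertex-transitivity and then count shells of equidistant vertices. Since $k$ is even, the construction recalled in the introduction makes $H_{k,n}$ vertex-transitive, so $C(i)=C(0)$ for every $i$ and $C(H_{k,n})=n\,C(0)$. Hence everything reduces to determining, for each $i\ge 1$, how many vertices sit at distance exactly $i$ from vertex $0$.

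Next I would describe the shell structure explicitly. Vertex $0$ is joined to the $k/2$ nearest vertices on each side of the outer cycle, and a single city-tour step advances by $k/2$ along the cycle. A shortest $0$-to-$j$ path therefore walks along the shorter of the two arcs between $0$ and $j$, using city-tours as much as possible and at most one shorter hop at the end; this gives a length of $\lceil 2|j|/k\rceil$, where $|j|$ is the shorter arc distance. Consequently, for every $1\le i\le \mathrm{diam}(H_{k,n})-1$ there are exactly $k$ vertices at distance $i$ from $0$, namely the $k/2$ vertices in each of the two symmetric arcs $\bigl((i-1)k/2,\,ik/2\bigr]$, and the remaining $n-1-k(\mathrm{diam}(H_{k,n})-1)$ vertices lie at distance $\mathrm{diam}(H_{k,n})$.

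The two cases of the theorem correspond to this remainder being $0$ or $t$. When $(n-1)\equiv 0\pmod k$ every shell is full, so by the diameter formula from \cite{Tanna} we have $\mathrm{diam}(H_{k,n})=\lfloor n/k\rfloor$ and
$$C(0)=\sum_{i=1}^{\mathrm{diam}(H_{k,n})}k\Bigl(\tfrac{1}{2}\Bigr)^{i}.$$
When $(n-1)\equiv t\pmod k$ with $t\neq 0$ we instead have $\mathrm{diam}(H_{k,n})=\lceil n/k\rceil$ and the last shell has size exactly $t$, so
$$C(0)=\sum_{i=1}^{\mathrm{diam}(H_{k,n})-1}k\Bigl(\tfrac{1}{2}\Bigr)^{i}+t\Bigl(\tfrac{1}{2}\Bigr)^{\mathrm{diam}(H_{k,n})}.$$
In both cases I would evaluate the geometric sums via identity~(\ref{geo}) and multiply the result by $n$ to recover the stated closed forms.

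The main obstacle is the distance-shell claim itself: checking that for each $i<\mathrm{diam}(H_{k,n})$ exactly $k$ vertices are at distance $i$, and that the leftover shell indeed contains $t$ vertices rather than being split between two distances. This rests on verifying that shortest paths progress monotonically along the shorter arc using city-tour steps, which follows from the circulant symmetry but must be stated carefully to rule out shortest paths that cross the diametrically opposite point. Once this structural lemma is in hand, the remaining algebra is routine.
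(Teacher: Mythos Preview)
Your proposal is correct and follows essentially the same route as the paper: reduce to $C(0)$ by vertex-transitivity, observe that each distance shell $1,\dots,\mathrm{diam}-1$ contains exactly $k$ vertices with the last shell containing the residue $t$ (or $k$ when $t=0$), and sum the resulting geometric series via~(\ref{geo}). Your account is in fact slightly more explicit than the paper's in justifying the shell count through the arc-distance formula, which is a welcome addition.
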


\begin{theorem}
$\!$Let $H_{k,n}\!$  be a Harary Graph on  $n$ vertices where $k$ is odd, $n$ is
even and  $diam(H_{k,n})\!\leq\!2$. Then,
\begin{equation*}
C(H_{k,n})=\left\{
\begin{array}{cc}
\frac{n(n-1)}{2} & ,\text{if }diam(H_{k,n})=1 \\
n(\frac{k}{2}+\frac{(n-k-1)}{2^{2}}) & ,\text{if }diam(H_{k,n})=2%
\end{array}%
\right.
\end{equation*}
\end{theorem}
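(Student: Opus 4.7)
The plan is to exploit the fact that when $k$ is odd and $n$ is even, the Harary graph $H_{k,n}$ is $k$-regular and vertex-transitive (as stated in the introduction). Vertex-transitivity means every vertex has the same closeness value, so it suffices to compute $C(0)$ and conclude $C(H_{k,n}) = n \cdot C(0)$.

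For the case $\mathrm{diam}(H_{k,n}) = 1$, I would simply observe that every pair of vertices is at distance $1$, so $H_{k,n}$ is the complete graph $K_n$. Then $C(0) = \sum_{j \neq 0} (1/2)^{1} = (n-1)/2$, and multiplying by $n$ gives the claimed $n(n-1)/2$.

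For the case $\mathrm{diam}(H_{k,n}) = 2$, the key observation is that, because $H_{k,n}$ is $k$-regular, vertex $0$ has exactly $k$ neighbors, i.e.\ $k$ vertices at distance $1$. Since every other vertex lies within distance $2$ by hypothesis, the remaining $n - 1 - k$ vertices must all lie at distance exactly $2$ from vertex $0$. Consequently
\begin{equation*}
C(0) \;=\; k \cdot \tfrac{1}{2} \;+\; (n - 1 - k)\cdot \tfrac{1}{2^{2}} \;=\; \tfrac{k}{2} + \tfrac{n-k-1}{2^{2}},
\end{equation*}
and multiplying by $n$ yields the stated formula.

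I do not expect any serious obstacle here: no geometric-series manipulation is needed, and the structural facts required (vertex-transitivity and $k$-regularity for $k$ odd with $n$ even) are recalled earlier in the paper. The only subtle point is to justify why, in the diameter-$2$ case, no vertex other than vertex $0$ itself and its $k$ neighbors is at distance less than $2$, but this is immediate from the fact that $\deg(0) = k$ in a simple graph, so the $k$ neighbors are exactly the vertices at distance $1$.
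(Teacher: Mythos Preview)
Your proof is correct and follows essentially the same approach as the paper: both invoke vertex-transitivity to reduce to computing $C(0)$, then count $k$ vertices at distance~$1$ and $n-k-1$ vertices at distance~$2$ in the diameter-$2$ case. Your explicit appeal to $k$-regularity to justify that exactly $k$ vertices lie at distance~$1$ is a small clarification the paper leaves implicit, but otherwise the arguments are the same.
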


\begin{proof}
The graph is vertex-transitive for the case $k$ is odd and $n$ is even. For $diam(H_{k,n})=1$, the distance of $n$ vertices to $(n-1)$ vertices
is one. Thus, we have%
\begin{equation*}
C(H_{k,n})=\frac{n(n-1)}{2}.
\end{equation*}%
For $diam(H_{k,n})=2,$ let choose vertex $0$ as an originator.
Therefore, there are $k$ vertices at the distance one, and the remaining vertices $(n-k-1)$ are at the distance two to any vertex from vertex $0$ in the graph.
Therefore, we get
\begin{equation*}
C(H_{k,n})=n(\frac{k}{2}+\frac{(n-k-1)}{2^{2}}).
\end{equation*}%
Thus, the proof is completed.
\end{proof}
\begin{remark}When the value of $k$ is even, the structure of the Harary Graph is in a simpler form. Even in the calculation of the diameter, different cases arise when $k$ is odd. Therefore, to facilitate the expression of the number of city-tours and distances between vertices for the case where $k$ is odd, we will utilize the $H_{k-1, n}$ graph forms. Consequently, employing the value $diam(H_{k-1, n})$ in expressing the theorems will be useful in reducing complexity in proofs.
	\end{remark}

\begin{theorem}\label{thm2.3}
Let $H_{k,n}$ be a Harary Graph on $n$ vertices where $k>3$\ is odd $n$\ is
even and \linebreak $diam(H_{k,n})>2$ and $n\equiv t \;(\bmod \ {(k-1)}),$ $%
t\neq 0,2$. Then,

\begin{itemize}
  \item if $diam(H_{k-1,n})$ is odd, then
\end{itemize}
\begin{equation*}
C(H_{k,n})=\frac{nk}{2}+n(k-1)(1-\frac{1}{2^{diam(H_{k,n})-2}})+\frac{%
n((k+t-3))}{2^{diam(H_{k,n})}}
\end{equation*}

\begin{itemize}
  \item if $diam(H_{k-1,n})$ is even, then
\end{itemize}
\begin{equation*}
C(H_{k,n})=\frac{nk}{2}+n(k-1)(1-\frac{1}{2^{diam(H_{k,n})-2}})+\frac{n(t-2)%
}{2^{diam(H_{k,n})}}.
\end{equation*}
\end{theorem}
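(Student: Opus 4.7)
The plan is to exploit vertex-transitivity --- which holds because $k$ is odd and $n$ is even --- to reduce the computation to $C(H_{k,n})=n\cdot C(0)$ with vertex $0$ as the originator. Following the Remark, I view $H_{k,n}$ as $H_{k-1,n}$ augmented by the diametrical matching $\{i,i+n/2\}$, so every shortest path out of $0$ is a sequence of city-tour steps of size $\pm(k-1)/2$, possibly prefixed by a single diametrical (village-tour) step.

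The distance-one neighbourhood of $0$ contains exactly $k$ vertices: the $k-1$ city-tour neighbours of $0$ in $H_{k-1,n}$ together with $n/2$. For each intermediate distance $d$ with $2\le d\le D-1$, where $D=diam(H_{k,n})$, I expect exactly $2(k-1)$ vertices: a ring of $k-1$ vertices reached by $d$ pure city-tour steps from $0$, and a disjoint ring of $k-1$ vertices reached by one diametrical step followed by $d-1$ city-tour steps out of $n/2$. Disjointness of these two rings for $d\le D-1$ is a direct consequence of $n\equiv t\;(\bmod\ (k-1))$ with $t\neq 0,2$, combined with the diameter formula for $H_{k,n}$ recalled in the Introduction. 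Assembling these contributions yields
\[
C(0)=\frac{k}{2}+\sum_{d=2}^{D-1}\frac{2(k-1)}{2^{d}}+\frac{B}{2^{D}},
\]
where $B$ is the number of vertices lying at the maximal distance $D$. An application of the geometric-series identity (\ref{geo}) collapses the middle sum to $(k-1)\bigl(1-1/2^{D-2}\bigr)$, so $nC(0)$ already reproduces the first two terms of each declared formula.

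The delicate remaining task is to pin down $B$, which is precisely what forces the case-split on the parity of $D_{1}:=diam(H_{k-1,n})$. My plan is to compare the positions of the balls of radius $D-1$ centred at $0$ and at $n/2$ via explicit modular arithmetic on the vertex labels, writing $n=q(k-1)+t$ with $0<t<k-1$. When $D_{1}$ is odd, $q$ is even and $D=\lceil n/(2k-2)\rceil$; the two expanding rings then meet ``in phase'', leaving an extra full arc of $k-1$ vertices together with the $t-2$ residual vertices, so that $B=k+t-3$. When $D_{1}$ is even, $q$ is odd and $D=\lceil n/(2k-2)\rceil+1$; the two rings have already absorbed the extra layer before meeting, so only $B=t-2$ survive at distance $D$. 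In each case, substituting $B$ into the previous display and multiplying by $n$ yields the two declared closed forms, and a quick sanity check $1+k+2(k-1)(D-2)+B=n$ confirms the counts. The principal obstacle is therefore not algebraic but this careful residue-and-parity book-keeping of the boundary layer, which I expect to verify by an explicit enumeration modulo $n$ rather than through any clean identity.
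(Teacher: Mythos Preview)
Your proposal is correct and follows essentially the same approach as the paper: exploit vertex-transitivity to reduce to $C(0)$, count $k$ vertices at distance $1$, $2(k-1)$ vertices at each intermediate distance $2,\ldots,D-1$, then case-split on the parity of $diam(H_{k-1,n})$ to determine the number $B$ of vertices at distance $D$, obtaining $B=(k-1)+(t-2)$ in the odd case and $B=t-2$ in the even case. Your planned verification of the boundary layer via the explicit parity of $q$ in $n=q(k-1)+t$ and the resulting diameter value is somewhat more explicit than the paper's terser city-tour description, and your sanity check $1+k+2(k-1)(D-2)+B=n$ is a useful addition the paper omits, but the underlying decomposition and counts are identical.
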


\begin{proof}
Let $n\equiv t \;(\bmod \ {(k-1)})$ and $diam(H_{k,n})>2.$ Assume
that vertices of $H_{k,n}$ are labeled as $0,...,(n-1)$. In order to construct $%
H_{k,n}$ when $k$ is odd, join vertex $0$ and $n/2$ in the form of $%
H_{k-1,n}.$ In this situation, there are $\frac{k-3}{2}+1$ options for $t$
such as$\ t=0,2,4,...,(k-3).$\\
 Let us consider the values of $t\neq 0,2.$ Since
the graph is vertex-transitive for this case, without loss of generality, we can choose vertex $0$ as the origin vertex in order to find\ the closeness value of a vertex. In the form of $H_{k-1,n},$ the distance between $0$ and vertex
$\frac{k-1}{2}\cdot(diam(H_{k,n})-1)$ in clockwise direction and using the
diametrically opposite direction between vertex $0$, and vertex $\frac{n}{2}-\frac{k-1}{2}\cdot(diam(H_{k,n})-2)$ is $(diam(H_{k,n})-1)$. Analogously, we get the same results for the counterclockwise direction from vertex $0.$ Therefore, the distance between for the total remaining $(t-2)$ vertices in both directions and vertex $0$ is $(diam(H_{k,n})-1)$ city-tour plus $1$ unit or $diam(H_{k,n})$
city-tour that is $diam(H_{k-1,n})=\left\lceil \frac{n}{k-1}\right\rceil $
away from vertex $0.$\ When vertices $0$ and $n/2$ join to get $H_{k,n}$, $%
diam(H_{k,n})=\left\lceil \frac{n}{2k-2}\right\rceil +1$, see
\cite{Tanna}.

\medskip
    If $diam(H_{k-1,n})$ is odd, there are $(k-1)+1=k$ vertices one
distance from vertex $0$, there are $2(k-1)$ vertices at most $%
(diam(H_{k,n})-1)$ distance and at least 2 distance from vertex 0 and there
are $(k-1)+(t-2)$ vertices $diam(H_{k,n})$ distance from vertex $0$. Thus,
the total value for $n$ vertices is%
\begin{eqnarray*}
C(H_{k,n}) &=&n[k\frac{1}{2}+2(k-1)\frac{1}{2^{2}}+...+2(k-1)\frac{1}{2^{diam(H_{k,n})-1}}+\frac{(k-1)+(t-2)}{2^{diam(H_{k,n})}}] \\
&=&\frac{nk}{2}+n(k-1)(1-\frac{1}{2^{diam(H_{k,n})-2}})+\frac{n((k+t-3)}{2^{diam(H_{k,n})}}.
\end{eqnarray*}

    If $diam(H_{k-1,n})$ is even, there are $(k-1)+1=k$ vertices one
distance from vertex $0$, there are $2(k-1)$ vertices in $2,3$,...,($%
diam(H_{k,n})-1)$ distance from $0$ and there are $(t-2)$ vertices $%
diam(H_{k,n})$ distance from vertex $0$. Thus, the total value for $n$ vertices
is%

\begin{eqnarray*}
  C(H_{k,n}) &=& n[k\frac{1}{2}+2(k-1)\frac{1}{2^{2}}+... + 2(k-1)\frac{1}{2^{diam(H_{k,n})-1}}+\frac{(t-2)}{2^{diam(H_{k,n})}}] \\
 &=& \frac{nk}{2}+n(k-1)(1-\frac{1}{2^{diam(H_{k,n})-2}})+\frac{n(t-2)}{2^{diam(H_{k,n})}}.
\end{eqnarray*}
Thus, the result is obtained.\end{proof}

\begin{theorem}
Let $H_{k,n}$ be a Harary Graph on $n$ vertices where $k>3$ is odd, $n$ is
even, and $diam(H_{k,n})>2$ and $n\equiv 2 \;(\bmod \ {(k-1)})$. Then,
\begin{itemize}
\item if $diam(H_{k-1,n})$ is odd, then
\end{itemize}
\begin{equation*}
C(H_{k,n})=\frac{nk}{2}+n(k-1)(1-\frac{3}{2^{diam(H_{k,n})}})
\end{equation*}

\begin{itemize}
\item if $diam(H_{k-1,n})$ is even, then
\end{itemize}
\begin{equation*}
C(H_{k,n})=\frac{nk}{2}+n(k-1)(1-\frac{1}{2^{diam(H_{k,n})-1}}).
\end{equation*}
\end{theorem}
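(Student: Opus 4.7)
The plan is to follow the template of Theorem \ref{thm2.3}: exploit vertex-transitivity (valid since $n$ is even and $k$ is odd), fix vertex $0$ as origin, count the vertices at each distance $d(0,v)$, and sum using the geometric series formula (\ref{geo}). Recall that $H_{k,n}$ is built from $H_{k-1,n}$ by adjoining the diametrical edges $i \sim i+n/2$, so from $0$ one can reach any other vertex either by city-tours in $H_{k-1,n}$ or by first crossing to $n/2$ and then performing city-tours. This immediately yields $k$ vertices at distance $1$: the $(k-1)/2$ neighbours on each side of $0$ together with the diametrical vertex $n/2$.

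For the subcase where $diam(H_{k-1,n})$ is odd, I expect the distance layers to match those of Theorem \ref{thm2.3} specialized to $t=2$: namely $2(k-1)$ vertices at each intermediate distance $2,\ldots,D-1$, and $(k-1)+(t-2)=k-1$ vertices at the maximum distance $D=diam(H_{k,n})$. Inserting these counts and applying (\ref{geo}) gives $\frac{nk}{2}+n(k-1)\bigl(1-\frac{1}{2^{D-2}}\bigr)+\frac{n(k-1)}{2^{D}}$; using $1/2^{D-2}=4/2^{D}$ and combining the last two terms collapses this to $\frac{nk}{2}+n(k-1)\bigl(1-\frac{3}{2^{D}}\bigr)$, as claimed.

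The subcase where $diam(H_{k-1,n})$ is even is where the proof genuinely differs. The diameter table from \cite{Tanna} records that when $n \equiv 2 \pmod{k-1}$ and $diam(H_{k-1,n})$ is even we have $diam(H_{k,n})=\lceil n/(2k-2)\rceil$ rather than the usual $\lceil n/(2k-2)\rceil+1$; geometrically, the two leftover vertices modulo $k-1$ are absorbed by the diametrical edges rather than forcing an extra step. I would argue that consequently every layer from distance $2$ up to $D=diam(H_{k,n})$ is saturated with exactly $2(k-1)$ new vertices, and a direct application of (\ref{geo}) to $\frac{k}{2}+2(k-1)\sum_{i=2}^{D}2^{-i}$, multiplied by $n$, yields $\frac{nk}{2}+n(k-1)\bigl(1-\frac{1}{2^{D-1}}\bigr)$.

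The main obstacle is justifying the saturation of the last layer in the even subcase without merely appealing to analogy with Theorem \ref{thm2.3}: one must verify that no double-counting occurs between the city-tour reach from $0$ and the reach routed through $n/2$, and that the two leftover vertices really are absorbed into the diametrical connection rather than contributing a partial final layer. A small sanity check, for instance $H_{5,22}$ (where $D=3$, the layer counts are $5,8,8$ totalling $n-1=21$, and the formula predicts $C=121$), would be a useful calibration before committing to the general combinatorial argument.
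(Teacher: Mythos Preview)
Your proposal is correct and follows essentially the same route as the paper: exploit vertex-transitivity, fix vertex $0$, determine the layer sizes ($k$ at distance $1$; $2(k-1)$ at each intermediate distance; $k-1$ at distance $D$ in the odd subcase versus $2(k-1)$ at distance $D$ in the even subcase), and sum via (\ref{geo}). The paper asserts these same layer counts directly, at the same level of rigor you worried about, so your ``main obstacle'' is not something the paper resolves more carefully than you do; your sanity check on $H_{5,22}$ is in fact a step beyond what the paper offers.
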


\begin{proof}
Let $n\equiv2\;(\bmod \ {(k-1)})$. Therefore, $diam(H_{k-1,n})=\left\lceil \frac{n}{k-1}\right\rceil\in\mathbb{Z}$. Similar to the proof of Theorem
\ref{thm2.3}%
, without loss of generality, let us consider the vertex $0$ as originated.
Constructing $H_{k-1,n},$ there are exactly $(diam(H_{k-1},_{n})-1)$ city- tour plus one unit to reach vertex $n/2.$

\medskip
    If $diam(H_{k-1,n})$ is odd then there are exactly even numbers of the city-tours between vertices $0$ and $n/2$ in graph $H_{k-1,n}.$ Since $(n-2)\equiv 0 \;(\bmod \ {(k-1)})$, in the form of $H_{k,n},$ $0$ and $n/2$ are
diametrically opposite vertices. Thus, for the vertex $0$, there are $k$ adjacent vertices, and there are $2(k-1)$ vertices in $2, 3, ..., diam(H_{k,n}-1)$ distance, and there are $k$ vertices at $diam(H_{k,n})$ distance. Hence, the total value for $n$ vertices is%
\begin{eqnarray*}
C(H_{k,n}) &=&n[\frac{k}{2}+\frac{2(k-1)}{2^{2}}+...+\frac{2(k-1)}{2^{diam(H_{k,n})-1}}+\frac{(k-1)}{2^{diam(H_{k,n})}}] \\
&=&\frac{nk}{2}+\frac{n(k-1)}{2}[1+\frac{1}{2}+...+\frac{1}{2^{diam(H_{k,n})-3}}]+\frac{n(k-1)}{2^{diam(H_{k,n})}}\\
&=&\frac{nk}{2}+n(k-1)[1-\frac{3}{2^{diam(H_{k,n})}}].
\end{eqnarray*}

    If $diam(H_{k-1,n})$ is even, then there are exactly an odd number of
city-tours between vertex $0$ and $n/2$ in graph $H_{k-1,n}.$ Since $%
n-2\equiv 0 \;(\bmod \ {(k-1)})$, in the form of $H_{k,n},$ $0$ and $n/2$ are
diametrically opposite vertices and there are exactly $2(k-1)$ vertex in
every distance except one from vertex $0$. Thus, the total value for $n$
vertices is\vspace*{-3mm}%
\begin{eqnarray*}
C(H_{k,n}) &=&n[k\frac{1}{2}+2(k-1)\frac{1}{2^{2}}+...+2(k-1)\frac{1}{2^{diam(H_{k,n})-1}}+2(k-1)\frac{1}{2^{diam(H_{k,n})}}] \\
&=&\frac{nk}{2}+n(k-1)(1-\frac{1}{2^{diam(H_{k,n})-1}}).
\end{eqnarray*}%
Therefore, the proof is completed.\end{proof}

\begin{theorem}
Let $H_{k,n}$ be a Harary Graph on $n$ vertices where $k>3$ is odd, $n$ is
even, and $diam(H_{k,n})>2$ and $n\equiv 0 \;(\bmod \ {(k-1)})$. Then,

\begin{itemize}
\item if $diam(H_{k-1,n})$ is odd, then
\end{itemize}
\begin{equation*}
C(H_{k,n})=\frac{nk}{2}+n(k-1)(1-\frac{1}{2^{diam(H_{k,n})-2}})+\frac{2n(k-2)}{2^{diam(H_{k,n})}}
\end{equation*}
\begin{itemize}
\item if $diam(H_{k-1,n})$ is even, then
\end{itemize}
\begin{equation*}
C(H_{k,n})=\frac{nk}{2}+n(k-1)(1-\frac{1}{2^{diam(H_{k,n})-2}})+\frac{n(k-3)}{2^{diam(H_{k,n})}}.
\end{equation*}
\end{theorem}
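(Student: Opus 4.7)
The plan is to compute $C(H_{k,n})$ by fixing vertex $0$ as the origin and multiplying by $n$, which is valid since $H_{k,n}$ is vertex-transitive for odd $k$ and even $n$. Following the template of Theorem~\ref{thm2.3}, I would first analyze distances in $H_{k-1,n}$ and then account for the shortcut introduced by the diametrical edge $(0,n/2)$. Since $n \equiv 0 \pmod{(k-1)}$ implies $(n-1) \equiv k-2 \pmod{(k-1)}$, Theorem~\ref{kiseven} applied with $k-1$ even tells us that in $H_{k-1,n}$ there are exactly $(k-1)$ vertices at each distance $1,2,\ldots,m-1$ from vertex $0$ and exactly $(k-2)$ vertices at distance $m := \operatorname{diam}(H_{k-1,n}) = n/(k-1)$, with $d_{k-1}(0,n/2) = m$.

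To compute distances in $H_{k,n}$, I would use $d(0,j) = \min\bigl(d_{k-1}(0,j),\ 1 + d_{k-1}(n/2,j)\bigr)$. Parametrizing the other vertices by their cyclic position $r \in \{1,\ldots,n/2\}$ (each $r < n/2$ appearing on both sides of $0$), the pair $(a,b) := (d_{k-1}(0,j),\,d_{k-1}(n/2,j))$ satisfies $a+b = m$ when $r$ is a multiple of $(k-1)/2$ (the \emph{on-path} vertices lying on a $0$-to-$n/2$ geodesic in $H_{k-1,n}$) and $a+b = m+1$ otherwise, since $r/((k-1)/2) + (n/2-r)/((k-1)/2) = m$ and either both summands are integers or their fractional parts produce an extra $+1$ after ceilings. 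Within each annulus $r \in ((i-1)(k-1)/2,\, i(k-1)/2)$ there are $(k-3)/2$ off-path positions per side, and the two endpoints contribute the on-path vertices.

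I would then split according to the parity of $m$ and tabulate the new distance $\min(a,1+b)$. When $m$ is odd, the formula of \cite{Tanna} gives $d := \operatorname{diam}(H_{k,n}) = (m+1)/2$; a straightforward pairing argument (each old distance $i \leq (m-1)/2$ maps to new distance $i$, while higher old distances fold symmetrically into the ring structure) shows that $H_{k,n}$ has $k$ vertices at distance $1$, $2(k-1)$ vertices at each distance $2,\ldots,d-1$, and $2(k-2)$ vertices at distance $d$. When $m$ is even, the diameter formula gives $d = m/2 + 1$; the analogous tabulation yields the same counts at distances $1,\ldots,d-1$ but only $k-3$ vertices at distance $d$, because no on-path vertex realizes the new diameter in this case. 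The main obstacle is verifying these terminal-ring counts, since the value at distance $d$ depends delicately on whether the geodesic midpoint of $0$ and $n/2$ in $H_{k-1,n}$ falls on an on-path vertex (even $m$, which pushes it into an inner ring under folding) or between two such vertices (odd $m$, which forces the pair to persist at the outer ring).

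Once the counts are confirmed, substituting into the definition of $C(0)$ and evaluating the finite geometric sum via formula~(\ref{geo}) yields
\begin{equation*}
C(0) = \frac{k}{2} + (k-1)\Bigl(1 - \frac{1}{2^{d-2}}\Bigr) + \frac{2(k-2)}{2^{d}}
\end{equation*}
in the odd-$m$ case, and the same expression with $2(k-2)$ replaced by $(k-3)$ in the even-$m$ case. Multiplying through by $n$ then produces the two displayed formulae in the statement.
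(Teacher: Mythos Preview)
Your proposal is correct and follows essentially the same approach as the paper: fix vertex $0$ by vertex-transitivity, determine the number of vertices at each distance in $H_{k,n}$ by analyzing how the diametrical edge to $n/2$ folds the distance structure of $H_{k-1,n}$, split into the two parity cases for $m=\operatorname{diam}(H_{k-1,n})$, and then sum the resulting geometric series. Your on-path/off-path bookkeeping and the Cayley-graph formula $d(0,j)=\min\bigl(d_{k-1}(0,j),\,1+d_{k-1}(n/2,j)\bigr)$ make explicit what the paper simply asserts (namely the counts $k$, $2(k-1)$, and $2(k-2)$ or $k-3$ at the respective distances), so your argument is in fact a slightly more detailed version of theirs.
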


\begin{proof}
Let $n\equiv 0\;(\bmod \ {(k-1)})$. Therefore, $diam(H_{k-1},_{n})=$ $\frac{n}{k-1}\in\mathbb{Z}
.$ Similar to the proof of Theorem
\ref{thm2.3}, without loss of generality, let's consider the distance of vertex $0$ from all other vertices. According to form of $H_{k-1,n},$ there are $\frac{n}{k-1}$ city-tour from vertex $0$ to vertex $n/2$. Vertices $0$ are $n/2$ diametrically opposite in $H_{k,n}$, we get $diam(H_{k,n})=\left\lceil \frac{n%
}{2k-2}\right\rceil +1$. Thus, there are $2\cdot(\frac{(k-1)}{2}-1)=k-3$
vertices occur diametrically distance from vertex $0$ utilizing $%
(diam(H_{k-1},_{n})-1)$ city-tour and one unit.

\medskip
    If $diam(H_{k-1,n})$ is odd there are $(k-1)+1=k$ vertices one
distance from vertex $0$, there are $2(k-1)$ vertices at most $%
diam(H_{k,n})-1$ distance and at least $2$ distance from vertex $0$ and
there are $(k-1)+(k-3)$ vertices $diam(H_{k,n})-$distance from vertex $0$.
Thus, the total value for $n$ vertices is%
\begin{align*}
\begin{split}
   C(H_{k,n}) =n[k\frac{1}{2}+2(k-1)\frac{1}{2^{2}}+...+2(k-1)\frac{1}{2^{diam(H_{k,n})-1}}+\frac{(k-1)+(k-3)}{2^{diam(H_{k,n})}}].
\end{split}
\end{align*}

    If $diam(H_{k-1,n})$ is even, there are $(k-1)+1=k$ vertices one
distance from vertex $0$, there are $2(k-1)$ vertices at $2,3,...,%
(diam(H_{k,n})-1)$ distance from $0$, and there are totally $(k-3)$ vertices $diam(H_{k,n})$ distance from vertex $0$ in both directions. Thus, the total value for $n$ vertices is\vspace*{-1.6mm}
\begin{eqnarray*}
\begin{split}
   C(H_{k,n}) =n[k\frac{1}{2}+2(k-1)\frac{1}{2^{2}}+...+2(k-1)\frac{1}{2^{diam(H_{k,n})-1}}+\frac{(k-3)}{2^{diam(H_{k,n})}}].
\end{split}
\end{eqnarray*}%
Then the proof is completed.\end{proof}

\begin{corollary}
Let $H_{k,n}$ be a Harary Graph on $n$ vertices where $k=3$ and $n$ is
even, and \linebreak $diam(H_{k,n}) >2$.

\begin{itemize}\vspace*{-1mm}
\item if $diam(H_{2,n})$ is odd, then
\end{itemize}
\begin{equation*}
C(H_{k,n})=\frac{7n}{2}-\frac{6n}{2^{diam(H_{k,n})}}
\end{equation*}
\begin{itemize}\vspace*{-1mm}
\item if $diam(H_{2,n})$ is even, then
\end{itemize}
\begin{equation*}
C(H_{k,n})=\frac{7n}{2}-\frac{2n}{2^{diam(H_{k,n})-1}}.
\end{equation*}
\end{corollary}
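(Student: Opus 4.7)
The plan is to prove the corollary directly rather than specializing the preceding odd-$k$ theorems, because for $k=3$ the diameter takes the form $\lceil n/4\rceil$ instead of $\lceil n/(2k-2)\rceil$ or $\lceil n/(2k-2)\rceil+1$, so the distance distribution from a fixed vertex has to be recounted. Since $k=3$ is odd and $n$ is even, $H_{3,n}$ is vertex-transitive, so it suffices to compute $C(0)$ and then multiply by $n$.

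Vertex $0$ has exactly three neighbors: $1$, $n-1$, and the diametrically opposite vertex $n/2$. Every geodesic from $0$ either proceeds along the cycle in one of the two directions, or first uses the chord to $n/2$ and then continues along the cycle. From this observation, for each $2\leq d\leq diam(H_{3,n})-1$ the four vertices $d$, $n-d$, $(n/2)-(d-1)$, $(n/2)+(d-1)$ are pairwise distinct and all lie at distance exactly $d$ from $0$, while distance $1$ is realized only by the three neighbors of $0$.

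The delicate layer is the maximal distance $diam(H_{3,n})=\lceil n/4\rceil$, where the behaviour splits according to the parity of $diam(H_{2,n})=n/2$. If $n/2$ is even, so $n\equiv 0\;(\bmod\ 4)$, the same four-vertex pattern persists at $d=n/4$; if $n/2$ is odd, so $n\equiv 2\;(\bmod\ 4)$, then on each side the ``cycle'' endpoint and the ``chord'' endpoint coincide, leaving only two distinct vertices in the top layer. With this distribution in hand, $C(0)$ reduces to a finite geometric sum in the ratio $1/2$ which I would evaluate using formula~(\ref{geo}), and multiplying by $n$ yields the two asserted expressions $\frac{7n}{2}-\frac{6n}{2^{diam(H_{3,n})}}$ and $\frac{7n}{2}-\frac{2n}{2^{diam(H_{3,n})-1}}$. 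The main obstacle is the parity bookkeeping at the maximal layer, namely checking that the four candidate vertices are pairwise distinct when $n\equiv 0\;(\bmod\ 4)$ and coincide precisely in pairs when $n\equiv 2\;(\bmod\ 4)$; once this is settled, the remaining algebra is a routine application of the geometric series formula.
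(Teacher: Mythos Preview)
Your proposal is correct and follows essentially the same route as the paper's proof: exploit vertex-transitivity to compute $C(0)$, count $3$ vertices at distance $1$, $4$ vertices at each distance $2,\dots,diam(H_{3,n})-1$, and then split the top layer into $2$ or $4$ vertices according to the parity of $diam(H_{2,n})=n/2$, finishing with the geometric series~(\ref{geo}). The only difference is cosmetic: you name the four vertices $d$, $n-d$, $\tfrac{n}{2}\pm(d-1)$ explicitly and phrase the case distinction via $n\bmod 4$, whereas the paper simply asserts the layer sizes.
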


\begin{proof}
  If $k=3$ then $n\equiv 0 \;(\bmod \ {(k-1)})$. However, for $diam(H_{2,n})$ is even, this corollary is not a special case of the
  previous theorem due to the calculation of diameter.

\medskip
  If $diam(H_{2,n})$ is odd there are three vertices one
distance from vertex $0$, there are four vertices at most $%
(diam(H_{k,n})-1)$ distance and there are two vertices at $diam(H_{k,n})-$distance from vertex $0$.
Thus, the total value for $n$ vertices is%
\begin{align*}
\begin{split}
   C(H_{k,n}) &=\frac{3n}{2}+4n[\frac{1}{2^{2}}+...+\frac{1}{2^{diam(H_{k,n})-1}}]+\frac{2}{2^{diam(H_{k,n})}}\\
   &=\frac{3n}{2}+2n(1-\frac{3}{2^{diam(H_{k,n})}})\\
   &=\frac{7n}{2}-\frac{6n}{2^{diam(H_{k,n})}}.
\end{split}
\end{align*}

If $diam(H_{k-1,n})$ is even, there are three vertices at one
distance from vertex $0$, there are four vertices at $2,3,...,$$(diam(H_{k,n})$ distance from $0$. Thus, the total value for $n$ vertices is
\begin{eqnarray*}
\begin{split}
   C(H_{k,n}) & =\frac{3n}{2}+4n[\frac{1}{2^{2}}+...+ \frac{1}{2^{diam(H_{k,n})}}]\\
     &=\frac{3n}{2}+2n(1-\frac{1}{2^{diam(H_{k,n})-1}})\\
     &=\frac{7n}{2}-\frac{2n}{2^{diam(H_{k,n})-1}}.
\end{split}
\end{eqnarray*}

\vspace*{-8mm}
\end{proof}\vspace*{1mm}

\begin{theorem}
Let $H_{k,n}$ be a Harary Graph on $n$ vertices where $k$\ is odd $n$\ is
odd and \linebreak $diam(H_{k,n})\leq 2.$ Then,
\begin{equation*}
C(H_{k,n})=\left\{
\begin{array}{cc}
\frac{n(n-1)}{2} & ,\text{if }diam(H_{k,n})=1 \\
\\[-6pt]
\frac{n^2+nk-n+1}{4}& ,\text{if }diam(H_{k,n})=2%
\end{array}%
\right. .
\end{equation*}
\end{theorem}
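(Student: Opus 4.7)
The plan is to split along the two cases of the diameter. For $diam(H_{k,n})=1$ the graph must coincide with the complete graph $K_n$, so every ordered pair of distinct vertices contributes $1/2$ to the Dangalchev sum, yielding immediately $C(H_{k,n})=n(n-1)/2$.

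For $diam(H_{k,n})=2$ the crucial first step is the degree sequence. Unlike the situations handled in the earlier theorems of this section, when both $k$ and $n$ are odd the graph $H_{k,n}$ is \emph{not} vertex-transitive, as already noted in the introduction. I would start from the construction that builds $H_{k,n}$ out of $H_{k-1,n}$ by inserting the edges $\{i,\,i+(n-1)/2\}$ for $0\le i\le (n-1)/2$, and then track how many of these new edges touch each vertex. The vertex $(n-1)/2$ is touched twice (once as the partner of $i=0$ and once as the source for $i=(n-1)/2$), whereas every other vertex is touched exactly once. Since every vertex of $H_{k-1,n}$ has degree $k-1$, this produces $\deg((n-1)/2)=k+1$ and $\deg(v)=k$ for all $v\ne(n-1)/2$.

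Because $diam(H_{k,n})=2$, every non-neighbour of a vertex $v$ sits at distance exactly $2$, so a vertex of degree $d$ has closeness $d/2+(n-1-d)/4$. Substituting the two possible degrees, the global sum becomes
$$C(H_{k,n})=\Bigl(\tfrac{k+1}{2}+\tfrac{n-k-2}{4}\Bigr)+(n-1)\Bigl(\tfrac{k}{2}+\tfrac{n-k-1}{4}\Bigr),$$
and clearing the common factor $4$ collapses the numerator, by routine algebra, to $n^2+nk-n+1$, which yields the claimed formula.

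The main conceptual obstacle is noticing the asymmetric role of the single vertex $(n-1)/2$; this is the only place where the proof departs from the template of the earlier theorems, since vertex-transitivity cannot be invoked. Once the two distinct degrees are correctly identified, the rest is a short symbolic manipulation analogous to those already performed in the section.
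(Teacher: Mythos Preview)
Your proposal is correct and takes essentially the same approach as the paper: split on the diameter, and in the $diam=2$ case observe that the single vertex $(n-1)/2$ has degree $k+1$ while all others have degree $k$, then sum the two contributions $\tfrac{k+1}{2}+\tfrac{n-k-2}{4}$ and $(n-1)\bigl(\tfrac{k}{2}+\tfrac{n-k-1}{4}\bigr)$. You give a cleaner justification for why $(n-1)/2$ alone has the extra neighbour, but the structure and the computation are identical to the paper's.
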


\begin{proof}
If $diam(H_{k,n})=1$ then the distance of $n$ vertices to $n-1$ vertices is $
1.$ Thus, we have,
\begin{equation*}
C(H_{k,n})=\frac{n(n-1)}{2}.
\end{equation*}%

\eject
Let $diam(H_{k,n})=2.$ For the vertex labeled by $(n-1)/2$, there are $(k+1)$ vertices that are adjacent to $(n-1)/2$, and there are $(n-1)-(k+1)=n-k-2$ vertices at a distance of two. For the remaining $(n-1)$ vertices, there are $k$ adjacent vertices and $(n-k-1)$ vertices at two distances to them. Therefore, we have
\begin{eqnarray*}
C(H_{k,n}) &=&\frac{k+1}{2}+\frac{n-k-2}{2^{2}}+(n+1)[\frac{k}{2}+\frac{n-k-1}{2^{2}}] \\
&=&\frac{n^2+nk-n+1}{4}.
\end{eqnarray*}%
Thus, the theorem holds.
\end{proof}

\begin{theorem}
\label{nkodd}Let $H_{k,n}$ be a Harary Graph on $n$ vertices where $k\geq3$  is
odd, $n$\ is odd, and \linebreak $diam(H_{k,n})>2$. Then,
\begin{equation*}
\begin{split}
   C(H_{k,n}) =\frac{A+kn+1}{2}+\frac{nt-A-1}{2^{diam(H_{k,n})}}+(k-1)(nB-\frac{diam(H_{k,n})-3}{2}-(\frac{1}{2})^{diam(H_{k,n})-1})
\end{split}
\end{equation*}%
where $(n-k-1)\equiv t \;(\bmod\ 2(k-1))$, $A=(diam(H_{k,n})-2)(k-1)$ and $B=1-(\frac{1}{2})^{diam(H_{k,n})-2}.$
\end{theorem}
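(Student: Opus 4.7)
The plan is to compute $C(H_{k,n})=\sum_{v} C(v)$ directly, but the decisive new feature in this case is that $H_{k,n}$ with both $k$ and $n$ odd is \emph{not} vertex-transitive: in the construction of $H_{k,n}$ from $H_{k-1,n}$ by adjoining the diametrical edges $\{i,\,i+(n-1)/2\}$ for $0\le i\le (n-1)/2$, only the vertex $(n-1)/2$ receives two such incidences (to both $0$ and $n-1$) and thus has degree $k+1$, while every other vertex has degree $k$. I would exploit the reflection automorphism $i\mapsto n-1-i$ (which fixes $(n-1)/2$) to halve the casework, and classify each vertex as \emph{generic}---its shortest-path tree uses only its own diametrical edge---or \emph{exceptional}---close enough to $(n-1)/2$ that some geodesic additionally exploits the second diametrical incidence at $(n-1)/2$.

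For a generic vertex $v$, I would adapt the city-tour analysis of Theorem~\ref{thm2.3}: in $H_{k-1,n}$ each intermediate distance contains $k-1$ arc-vertices on each side of $v$, and adjoining the diametrical edge at $v$ contracts the far half of the arc. This yields the generic distance profile of $k$ vertices at distance $1$ (the $k-1$ city-tour neighbours together with the diametrical partner), $2(k-1)$ vertices at each intermediate distance $2,3,\dots,d-1$ with $d=diam(H_{k,n})$, and a residual block at distance $d$ of size $t=(n-k-1)\bmod 2(k-1)$---the modulus $2(k-1)$ matching the one appearing in the diameter formula recalled in the Introduction. Summing this profile over all $n$ vertices and collapsing the geometric progression via \eqref{geo} produces the principal contribution
\begin{equation*}
\frac{nk}{2}+n(k-1)B+\frac{nt}{2^{d}}.
\end{equation*}

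Next I would compute the correction coming from vertices close to $(n-1)/2$, where the second diametrical incidence creates additional one-step shortcuts: at each distance $i=2,\dots,d$, a small, $t$-independent number of target vertices migrate from distance $i$ to distance $i-1$, contributing telescoping increments of the form $1/2^{i}$. Enumerating these migrations distance by distance, pairing symmetric contributions through the reflection, and collecting the resulting finite sum with \eqref{geo}, one obtains an exceptional correction of the form $\frac{1}{2}+\frac{A}{2}-\frac{A+1}{2^{d}}-(k-1)\!\left(\frac{d-3}{2}+\frac{1}{2^{d-1}}\right)$. Adding this to the generic contribution and simplifying with $A=(d-2)(k-1)$ and $B=1-(1/2)^{d-2}$ collects everything into the stated closed form.

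The main obstacle is the exceptional bookkeeping: identifying exactly which vertex is shortened at each distance $i$, verifying that the count depends only on $k$ and $d$ and not on the residue $t$, and checking that the alternating telescoping sum of shortcut contributions collapses to the compact constants $-\frac{d-3}{2}$ and $-\frac{1}{2^{d-1}}$ appearing in the last factor of the formula. Once these counts are pinned down, the remaining manipulations are routine applications of \eqref{geo}.
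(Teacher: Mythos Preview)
Your plan is essentially the paper's own approach: both recognise the reflection $i\mapsto n-1-i$, compute the generic profile $C(0)=\frac{k}{2}+(k-1)B+t/2^{d}$ for vertices in what the paper calls the set $RM$, and treat the $A=(d-2)(k-1)$ exceptional vertices near $(n-1)/2$ in layers indexed by their city-tour offset $j$ from $(n-1)/2$, summing the resulting telescoping differences via \eqref{derigeo}. The only cosmetic difference is the choice of reference point---the paper anchors the layered correction at $C(\frac{n-1}{2})$ rather than at $C(0)$---and since your stated correction $\frac{A+1}{2}-\frac{A+1}{2^{d}}-(k-1)\bigl(\frac{d-3}{2}+2^{-(d-1)}\bigr)$ equals $(A+1)\bigl(C(\frac{n-1}{2})-C(0)\bigr)$ minus the paper's defect sum, the two bookkeepings coincide line for line.
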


\begin{proof}
In this case, the graph is not vertex-transitive \cite{Tanna}. Let $%
V(H_{k,n})=\{0,1,...,n-1\}$ be the vertex set of the graph. The closeness
value is obtained as different as the value of the diameter in the graph.
The first type of closeness belongs to vertex $v=\frac{n-1}{2},$ the other
types of closeness come from vertices $v=\frac{n-1}{2}\pm ((\frac{k-1}{2}%
)j+i),$ for$~0\leq j\leq (diam(H_{k,n})-3)$ and $1\leq i\leq (\frac{k-1}{2})$
$.$ This means that there are ($diam(H_{k,n})-2)$ vertex groups include $(k-1)$ vertices with the same closeness value on both sides of the vertex $\frac{n-1%
}{2}$. The other closeness value gets from remaining $n-(k-1)(diam(H_{k,n})-2)-1$
vertices. In addition, we can create a set, denoted by $RM,$ that includes
these remaining $n-(k-1)(diam(H_{k,n})-2)-1$ vertices$.$ Here, there are $t$ vertices at distance diameter to vertices in the set $RM$ where $(n-k-1)\equiv t(\bmod\ 2(k-1))$.

Let us choose vertex $0$ from $RM.$ Since there are $%
(n-k-1)$ vertices in the graph at $2,3,...\linebreak diam(H_{k,n})$ distances from
vertex $0$ and we can make $2,3,...(diam(H_{k,n})-1)$ city-tour from vertex $%
0, $ each city-tour covers $2(k-1)$ vertices. Therefore remaining $t$
number of vertices can be reached by $diam(H_{k,n})-1$ city-tour plus one
village tour.

\medskip
    For the vertex $v=\frac{n-1}{2},$ there are $\ k+1$ vertices in
distance one, there are $2(k-1)$ vertices in each distance $%
2,3,...,(diam(H_{k,n})-1)$\ and since there are two diametrically opposite
vertices of $v=\frac{n-1}{2},$ there are $t-1$ vertices at distance $%
diam(H_{k,n}).$ Hence, the closeness value of $v=\frac{n-1}{2}$ is

\begin{equation}
C(v)=\frac{k+1}{2}+\sum\limits_{i=2}^{diam(H_{k,n})-1}\frac{2(k-1)}{2^{i}}+%
\frac{t-1}{2^{diam(H_{k,n})}}.
\end{equation}
Using the eqution (\ref{geo}), we have
\begin{equation}
\begin{split}
   C(\frac{n-1}{2}) =\frac{k+1}{2}+(k-1)(1-(\frac{1}{2})^{diam(H_{k,n})-2})+\frac{t-1}{2^{diam(H_{k,n})}}.
\end{split}
\label{eq1}
\end{equation}

    For each index $j$ where $0\leq j\leq (diam(H_{k,n})-3),$ the vertices
$v=\frac{n-1}{2}\pm ((\frac{k-1}{2})j+i)$ for $1\leq i\leq (\frac{k-1}{2})$,
each $j$ group is ($\frac{1}{2^{j+1}}-\frac{1}{2^{j+2}}$) less than the
closeness value of the vertex $\frac{n-1}{2}.$ Then, there are ($%
diam(H_{k,n})-2)$ group with $(k-1)$ vertices. Thus, the total closeness value of these kinds of vertices%
\begin{eqnarray*}
C(v)\hspace*{-2mm}
               &=&\hspace*{-2mm}(k-1)[C(\frac{n-1}{2})-(\frac{1}{2}-\frac{1}{2^{2}})]+(k-1)[C(\frac{n-1}{2})-(\frac{1}{2}-\frac{1}{2^{2}})\\
               \hspace*{-2mm}&-&\hspace*{-2mm}(\frac{1}{2^{2}}-\frac{1}{2^{3}})]\!+...+\!(k-1)[C(\frac{n-1}{2})-(\frac{1}{2}-\frac{1}{2^{2}})\!-...-\!
 (\frac{1}{2^{diam(H_{k,n})-2}}-\frac{1}{2^{diam(H_{k,n})-1}})] \\
\hspace*{-2mm}&=&\hspace*{-2mm}(k-1)[(diam(H_{k,n}-2)C(\frac{n-1}{2})\\
\hspace*{-2mm}&-&\hspace*{-2mm}((diam(H_{k,n}-2))\frac{1}{4}-(diam(H_{k,n}-3))\frac{1}{8}-...-\frac{1}{2^{diam(H_{k,n})-1}})].
\end{eqnarray*}
Then, we get
\begin{eqnarray*}
C(v)=(k-1)[(diam(H_{k,n}-2)C(\frac{n-1}{2})-\sum_{i=2}^{diam(H_{k,n})-1}\frac{diam(H_{k,n})-i}{2^{i}}].
\end{eqnarray*}
Using formula (\ref{derigeo}) for $X=1/2$ and $h=diam(H_{k,n})$, then we have
\begin{equation}
\begin{split}
   C(v)  =(k-1)[(diam(H_{k,n}-2)C(\frac{n-1}{2})-(k-1)(\frac{diam(H_{k,n})-3}{2}+\frac{1}{2^{diam(H_{k,n})-1}})
\end{split}
\label{eq2}
\end{equation}

For the remaining vertex in the set $RM$, there are $k$ vertices at
distance one, there are $2(k-1)$ vertices at each distance $%
2,3,...,(diam(H_{k,n})-1)$ and there are $t$ vertices at distance $%
diam(H_{k,n}).$ We can choose vertex $v$ in $RM$ as vertex $0$. Thus, the closeness value of a vertex $v$ in $RM$ is%
\begin{equation}
C(0)=\frac{k}{2}+\sum\limits_{i=2}^{diam(H_{k,n})-1}\frac{2(k-1)}{2^{i}}+%
\frac{t}{2^{diam(H_{k,n})}}.
\end{equation}%
Applying the formula (\ref{geo}), we have
\begin{equation}
C(0)=\frac{k}{2}+(k-1)(1-(\frac{1}{2})^{diam(H_{k,n})-2})+\frac{t}{%
2^{diam(H_{k,n})}}.  \label{eq3}
\end{equation}%
Also, there are $n-(k-1)(diam(H_{k,n})-2)-1$ vertices of this type in the
graph. Therefore, using equations (\ref{eq1}),(\ref{eq2}) and (\ref{eq3})
the closeness value of $H_{k,n}$ can be expressed as for odd values of $k$
and $n$%
\begin{eqnarray}
  \nonumber C(H_{k,n}) &=& C(\frac{n-1}{2})\\ \nonumber
  &+&(n-(k-1)(diam(H_{k,n})-2)-1)C(0) \\\label{burdenli}
  &+&(k-1)(diam(H_{k,n}-2))C(\frac{n-1}{2})\\\nonumber
  &-&(k-1)[(\frac{(diam(H_{k,n})-3)}{2})+\frac{1}{2^{diam(H_{k,n})-1}}].\nonumber
\end{eqnarray}%
In order to get rid of burden in equation (\ref{burdenli}) in terms of
notation, we can take $A=(diam(H_{k,n})-2)(k-1)$ and $B=1-(\frac{1}{2}%
)^{diam(H_{k,n})-2}$ and substitute $A$ and $B$ into the (\ref{eq1}), (\ref{eq2})
and (\ref{eq3}). Thus, we can express closeness value
\begin{equation*}
\begin{split}
C(H_{k,n})=\frac{A+kn+1}{2}+\frac{nt-A-1}{2^{diam(H_{k,n})}}+(k-1)(nB-\frac{diam(H_{k,n})-3}{2}-(\frac{1}{2})^{diam(H_{k,n})-1}).
\end{split}
\end{equation*}%
The closeness value is yield.
\end{proof}
Before considering the closeness value of the Consecutive Circulant Graph,
the connection can be made with the Harary Graph. Consecutive Circulant Graph
definition corresponds to Harary Graph structure when $k$ is even. Thus
the following result can be obtained from Theorem \ref{kiseven}.

\begin{corollary}
Let $C_{n},_{[\ell ]}$ be Consecutive Circulant Graph and $(n-1)\equiv t \ (mod 2\ell)$ then closeness of $C_{n},_{[\ell ]}$%
\begin{equation*}
C(C_{n},_{[\ell ]})=\left\{
\begin{array}{cc}
n\cdot (2l+(\frac{1}{2})^{diam(C_{n},_{[\ell ]})}(t-4l))&,\text{ if }t\neq 0
\\
2nl\cdot (1-\dfrac{1}{2^{diam(C_{n},_{[\ell ]})}})&,\text{ if }t=0%
\end{array}%
\right.
\end{equation*}%
where $diam(C_{n},_{[\ell ]})=\left\lceil \frac{n}{2l}\right\rceil $ when $%
t\neq 0$ and $diam(C_{n},_{[\ell ]})=\left\lfloor \frac{n}{2l}\right\rfloor $
when $t=0.$
\end{corollary}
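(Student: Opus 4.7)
The plan is to observe that the Consecutive Circulant Graph $C_{n,[\ell]}$ is structurally identical to the Harary Graph $H_{k,n}$ when we set $k=2\ell$, so that the corollary will follow by a direct substitution into Theorem \ref{kiseven}. Indeed, by definition vertex $v$ in $C_{n,[\ell]}$ is joined to $v\pm 1,v\pm 2,\dots,v\pm\ell$ modulo $n$; these are exactly the $\ell=k/2$ nearest vertices in each direction around the cycle, which is precisely the construction of $H_{k,n}$ for even $k$. Hence $C_{n,[\ell]}=H_{2\ell,n}$, and in particular this graph is $(2\ell)$-regular and vertex-transitive.

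Next, I would translate the parity/congruence condition. Theorem \ref{kiseven} splits on whether $(n-1)\equiv 0$ or $(n-1)\equiv t\pmod{k}$ with $t\neq 0$; putting $k=2\ell$ rewrites this as $(n-1)\equiv 0$ or $(n-1)\equiv t\pmod{2\ell}$, exactly the dichotomy in the corollary. For the diameter I would invoke the even-$k$ diameter formula quoted in the introduction: $\operatorname{diam}(H_{k,n})=\lfloor n/k\rfloor$ precisely when $n\equiv 1\pmod k$ (equivalently $n-1\equiv 0\pmod{2\ell}$, i.e.\ $t=0$) and $\operatorname{diam}(H_{k,n})=\lceil n/k\rceil$ otherwise; substituting $k=2\ell$ yields the two diameter expressions claimed in the corollary.

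Finally, I would substitute $k=2\ell$ into the two closed forms from Theorem \ref{kiseven}. The $t\neq 0$ case becomes
\[
C(C_{n,[\ell]})=n\!\left(2\ell+\Bigl(\tfrac{1}{2}\Bigr)^{\!\operatorname{diam}(C_{n,[\ell]})}(t-4\ell)\right),
\]
and the $t=0$ case becomes
\[
C(C_{n,[\ell]})=n\!\left(2\ell-\Bigl(\tfrac{1}{2}\Bigr)^{\!\operatorname{diam}(C_{n,[\ell]})}\cdot 2\ell\right)=2n\ell\!\left(1-\frac{1}{2^{\operatorname{diam}(C_{n,[\ell]})}}\right),
\]
matching the two branches of the corollary. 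There is no genuine obstacle here, since all the combinatorial work (counting vertices at each distance and summing the geometric series) has already been done in Theorem \ref{kiseven}; the only thing to be careful about is verifying that the case distinction on $(n-1)\bmod 2\ell$ lines up correctly with the corresponding case distinction in the diameter formula so that the right expression for $\operatorname{diam}(C_{n,[\ell]})$ is paired with the right closeness value.
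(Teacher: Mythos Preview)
Your proposal is correct and follows essentially the same approach as the paper: identify $C_{n,[\ell]}$ with $H_{2\ell,n}$ and substitute $k=2\ell$ into Theorem~\ref{kiseven}. Your write-up is in fact more careful than the paper's, explicitly checking that the case split on $(n-1)\bmod 2\ell$ aligns with the even-$k$ diameter formula, whereas the paper simply states the identification and quotes the resulting expressions.
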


\begin{proof}
The definition of Consecutive Circulant Graph overlaps the definition of Harary Graph when $k=2l$. Therefore, the Consecutive Circulant Graph is a Harary Graph for $k$ is even case. Thus, substitute $2l$ into the $k$ $\ $and $%
diam(H_{k,n})=diam(C_{n},_{[\ell ]})$ into the Theorem \ref{kiseven}. We get

\begin{equation*}
C(C_{n},_{[\ell ]})=\left\{
\begin{array}{cc}
n\cdot (2l+(\frac{1}{2})^{diam(C_{n},_{[\ell ]})}(t-4l))&,\text{ if }t\neq 0
\\
2nl\cdot (1-\dfrac{1}{2^{diam(C_{n},_{[\ell ]})}})&,\text{ if }t=0%
\end{array}%
\right..
\end{equation*}%

\vspace*{-6mm}
\end{proof}

\section{Vertex residual closeness of Harary Graphs, $H_{k,n}$}

In this section, we will calculate the (vertex) residual closeness value of the Harary Graph for each situation, which is denoted as $R$. This value corresponds to the minimum value of $C_{k}$ which represents the closeness value after removal of vertex $k$. As a result, we will identify the most sensitive node in the graph for each specific case.

\begin{theorem}\label{diam2}
Let $H_{k,n}$ be a Harary Graph on $n$ vertices where $k>2$ is even and $%
diam(H_{k,n})>2.$ Then, residual closeness value of $H_{k,n}$
\begin{itemize}
  \item If $n\equiv 1\; (\bmod\ k)$, then
\end{itemize}
\begin{equation*}
  R=\frac{n-2}{n}C(H_{k,n})-1+(\frac{1}{2})^{diam(H_{k,n})}\cdot(1+diam(H_{k,n}))
\end{equation*}
\eject
\begin{itemize}
  \item If $n\not\equiv 1\; (\bmod\ k)$, then
\end{itemize}
\begin{equation*}
  R=\frac{n-2}{n}C(H_{k,n})-1+(\frac{1}{2})^{diam(H_{k,n})-1}\cdot diam(H_{k,n})
\end{equation*}
where $C(H_{k,n})$ is closeness value for $k$ is even.
\end{theorem}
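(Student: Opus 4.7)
The plan is to leverage the vertex-transitivity of $H_{k,n}$ when $k$ is even. This implies that $C_v$ takes the same value at every vertex $v$, so $R=\min_v C_v = C_0$, and moreover $C(0) = C(H_{k,n})/n$. Splitting off the terms involving vertex~$0$ in the defining double sum of closeness gives
\begin{equation*}
C_0 \;=\; \sum_{\substack{i,j\ne 0\\ i\ne j}}\frac{1}{2^{d_0(i,j)}}
\;=\; \bigl(C(H_{k,n})-2C(0)\bigr) \;-\; L
\;=\; \frac{n-2}{n}\,C(H_{k,n})\;-\;L,
\end{equation*}
where
\begin{equation*}
L \;=\; \sum_{\substack{\{i,j\}\subset V(H_{k,n})\setminus\{0\}\\ d_0(i,j)>d(i,j)}}
2\bigl(\tfrac{1}{2^{d(i,j)}}-\tfrac{1}{2^{d_0(i,j)}}\bigr)
\end{equation*}
captures the loss from those unordered pairs whose distance strictly increases when $0$ is removed. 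So the theorem reduces to evaluating $L$.

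Next I would identify the affected pairs. Writing $m=k/2$, distances in $H_{k,n}$ satisfy $d(i,j)=\lceil\ell/m\rceil$, where $\ell=\min(|i-j|,n-|i-j|)$ is the ring distance, and $d_0(i,j)>d(i,j)$ precisely when every shortest $i$--$j$ path in $H_{k,n}$ is forced through $0$. This happens if and only if: (i)~$\ell=qm$ exactly for some $q\ge 2$, so that the $q$ jumps of the shortest path must each have magnitude exactly $m$ in the short direction (uniqueness); (ii)~$\ell<n/2$ strictly, ruling out a competing short arc on the opposite side of the ring; and (iii)~vertex~$0$ is one of the $q-1$ interior vertices of that arc, which forces $(i,j)=(n-sm,(q-s)m)$ for some $s\in\{1,\dots,q-1\}$. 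This yields exactly $q-1$ affected unordered pairs per admissible $q$, and a one-step detour replacing the segment $n-m\to 0\to m$ by $n-m\to n-1\to 1\to m$ (a valid path since $m\ge 2$ and $n>2k$) shows $d_0(i,j)=d(i,j)+1$.

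The admissible range of $q$ splits into the two stated cases. If $n\equiv 1\pmod{k}$, with $D=\lfloor n/k\rfloor$, then $Dm=(n-1)/2<n/2$, so $q$ ranges over $\{2,\dots,D\}$ and $Q:=D$. If $n\not\equiv 1\pmod{k}$, the value $q=D$ is excluded: either $n\equiv 0\pmod{k}$ and then $Dm=n/2$ so two shortest arcs exist (one of which avoids $0$), or $n\equiv t\pmod{k}$ with $2\le t\le k-1$ and then $Dm>n/2$ is not even the short arc; hence $Q=D-1$. Each eligible unordered pair contributes $2(2^{-q}-2^{-(q+1)})=2^{-q}$ to $L$, so
\begin{equation*}
L=\sum_{q=2}^{Q}\frac{q-1}{2^q},
\end{equation*}
which by identity~(\ref{derigeo}) with $X=1/2$ evaluates to $L=1-(Q+1)/2^Q$. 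Substituting $Q=D$ or $Q=D-1$ into $R=\frac{n-2}{n}C(H_{k,n})-L$ recovers the two formulas of the theorem. The main technical obstacle is the characterization (i)--(iii), in particular verifying the uniqueness of the short-arc shortest path when $\ell=qm$, which follows from the fact that $q$ jumps of magnitude at most $m$ summing to $qm$ must each be of magnitude exactly $m$ in the short direction; this must then be combined with the explicit detour to conclude $d_0=d+1$ rather than $d_0\ge d+2$.
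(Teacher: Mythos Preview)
Your proposal is correct and follows essentially the same route as the paper: exploit vertex-transitivity to reduce to $C_0=\frac{n-2}{n}C(H_{k,n})-D_0$, identify the pairs whose distance increases upon deleting~$0$ as exactly the pairs $(n-sm,(q-s)m)$ with $m=k/2$ and $2\le q\le Q$, and sum $\sum_{q=2}^{Q}\frac{q-1}{2^q}$ with $Q=D$ or $Q=D-1$ according to the residue of $n$ modulo~$k$. Your treatment is in fact more careful than the paper's, which simply asserts the set of affected pairs and the increment~$+1$; your conditions (i)--(iii), the uniqueness argument for the short-arc geodesic when $\ell=qm$, and the explicit detour $n-m\to n-1\to 1\to m$ fill in justifications the paper leaves implicit.
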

\begin{proof}
Since $k$ is even the $H_{k,n}$ is vertex transitive. Therefore, removing any
vertex from the graph will have the same effect on the closeness value.
Without loss of generality, let the vertex $0$ be deleted from the graph.
Thus,%
\begin{eqnarray*}
C_{0} &=&C(H_{k,n})-2\sum\limits_{j\neq 0}\frac{1}{2^{d(0,j)}}-D_{0} \\
&=&\frac{n-2}{n}\cdot C(H_{k,n})-D_{0}
\end{eqnarray*}%
where $D_{0}$ denotes changes in closeness value of $C(H_{k,n})$ after
deleting $0$ and also $C(H_{k,n})=n\sum\limits_{j\neq 0}\frac{1}{2^{d(0,j)}}
$ for $k$ is even. The value of $C(H_{k,n})$ is calculated in Theorem \ref{kiseven}. Let evaluate value of $D_{0}$ in terms of value of $n$ in $(\bmod\ k).$

\medskip
    Observation 1. \ for $D_{0}$: \\
     If $n\equiv 1\; (\!\!\!\mod k)$ then there
are $k$ vertices at distance $1,2,3,...,diam(H_{k,n})$ where $%
diam(H_{k,n})=\left\lfloor \frac{n}{k}\right\rfloor .$ When vertex $0$ is
deleted, the distances between for each vertex $\frac{k}{2}.(j-1)$ and $%
n-\frac{k}{2}i$  where $1\leq i\leq (diam(H_{k,n})-j+1)$ and $2\leq j\leq
diam(H_{k,n})$ increase by 1. Hence, $D_{0}=\sum%
\limits_{i=2}^{diam(H_{k,n})}\frac{i-1}{2^{i}}.$ The summation can be expanded
as%

\begin{equation*}
\sum\limits_{i=2}^{diam(H_{k,n})}\frac{i-1}{2^{i}}=(\frac{1}{2^{2}}+\frac{2}{%
2^{3}}+...+\frac{diam(H_{k,n})-1}{2^{diam(H_{k,n})}})
\end{equation*}%
and the summation can be modified as%
\begin{equation}
=\frac{1}{2^{2}}(1+\frac{2}{2}+\frac{3}{2^{2}}+...+\frac{diam(H_{k,n})-1}{%
2^{diam(H_{k,n})-2}})  \label{sum1}
\end{equation}%
and the summation result in (\ref{sum1}) is obtained by using equation (\ref{derigeo}), substituting $X=\frac{1}{2}$ and $h=diam(H_{k,n})$ into
the equation. Thus, we get%
\begin{equation}
=\frac{1}{2^{2}}(\frac{diam(H_{k,n})\cdot\frac{1}{2}^{(diam(H_{k,n})-1)}}{(1/2)-1%
}-\frac{\frac{1}{2}^{diam(H_{k,n})}-1}{((1/2)-1)^{2}})
\end{equation}%
Thus, we get
\begin{equation}
\sum\limits_{i=2}^{diam(H_{k,n})}\frac{i-1}{2^{i}}=1-(diam(H_{k,n})+1)(\frac{%
1}{2})^{diam(H_{k,n})})  \label{summ1}
\end{equation}

    Observation 2. for $D_{0}$:\\
     If $n\not\equiv 1\, (\!\bmod\ {k}) $ then there
are $k$ vertices at distance $1,2,3,...,diam(H_{k,n})-1$ where $%
diam(H_{k,n})\!=\left\lceil \frac{n}{k}\right\rceil $ and there are $%
(n-1-(diam(H_{k,n})-1)k)$ vertices at distance $diam(H_{k,n}).$ If the vertex $0$ is
deleted, the distances between for each vertex $\frac{k}{2}(j-1)$ and the vertices $%
n-\frac{k}{2}i$ $\ $where $1\leq i\leq (diam(H_{k,n})-j)$ and $2\leq j\leq
diam-1$ increase by 1. Thus,

 $D_{0}=\sum\limits_{i=2}^{diam(H_{k,n})-1}\frac{%
i-1}{2^{i}}.$ Using equation (\ref{summ1}), we can express the summation as
\begin{equation}
\sum\limits_{i=2}^{diam(H_{k,n})-1}\frac{i-1}{2^{i}}=1-diam(H_{k,n})(\frac{1%
}{2})^{diam(H_{k,n})-1}.  \label{sum3}
\end{equation}%
Hence, utilizing equations (\ref{summ1}) and (\ref{sum3})%

\medskip
    If $n\equiv 1\; (\bmod\ k)$
\begin{equation*}
  R=\frac{n-2}{n}C(H_{k,n})-1+(\frac{1}{2})^{diam(H_{k,n})}\cdot(1+diam(H_{k,n}))
\end{equation*}

    If $n\not\equiv 1\; (\bmod\ k)$
\begin{equation*}
  R=\frac{n-2}{n}C(H_{k,n})-1+(\frac{1}{2})^{diam(H_{k,n})-1}\cdot diam(H_{k,n})
\end{equation*}
where $C(H_{k,n})$ is closeness value for $k$ is even. Therefore, the proof is completed.
\end{proof}
\begin{remark}
For $k=2$, the Harary Graph corresponds to the cycle graph. Therefore, the residual closeness value of $C_{n}$ is obtained in \cite{OdabasAytac}. Additionally, when $k$ is even and the diameter is $2$, in the case of $n \equiv 1 \pmod{k}$, Theorem \ref{diam2} yields the correct result. However, when $k$ is even and either $n \not\equiv 1 \pmod{k}$ or the diameter is $1$, Theorem \ref{diam2} does not apply. In these situations, the $D_{0}$ value vanishes. Therefore, the residual closeness value $R = \frac{n-2}{n}C(H_{k,n})$ is obtained.\end{remark}

\begin{theorem}
Let $H_{k,n}$ be a Harary Graph on $n$ vertices where $k>3$ is odd and $n$ is
even and $diam(H_{k,n})>2.$ Then, residual closeness value of $H_{k,n}$ can be expressed as follows:
\begin{itemize}
  \item If $n= (k-1)(2diam(H_{k,n})-1)+2$, then
\end{itemize}
\begin{equation*}
  R=\frac{n-2}{n}C(H_{k,n})-(1-\frac{(1+diam(H_{k,n}))}{2^{diam(H_{k,n})}})
\end{equation*}
\begin{itemize}
  \item If $n\neq(k-1)(2diam(H_{k,n})-1)+2$, then
\end{itemize}
\begin{equation*}
  R=\frac{n-2}{n}C(H_{k,n})-(1-\frac{diam(H_{k,n})}{2^{diam(H_{k,n})-1}}).
\end{equation*}
where $C(H_{k,n})$ closeness value of Harary Graph for $k$ is odd and $n$ is
even.
\end{theorem}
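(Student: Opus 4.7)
The plan is to adapt the template used in Theorem~\ref{diam2}. Since $H_{k,n}$ is vertex-transitive when $k$ is odd and $n$ is even, the residual closeness $R$ coincides with $C_{v}$ for any single deleted vertex $v$, so we may remove vertex $0$ without loss of generality. A standard expansion of the double sum gives
\begin{equation*}
C_{0} \;=\; \frac{n-2}{n}\,C(H_{k,n}) \;-\; D_{0},
\end{equation*}
where
\begin{equation*}
D_{0} \;:=\; \sum_{\substack{i,j \neq 0\\ i \neq j}} \Bigl(\tfrac{1}{2^{d(i,j)}} - \tfrac{1}{2^{d_{0}(i,j)}}\Bigr)
\end{equation*}
records the extra decrease in closeness arising from pairs of surviving vertices whose shortest-path distance strictly grows when $0$ is deleted. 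The quantity $C(H_{k,n})$ is already furnished by the earlier theorems of Section~2, so the task reduces to evaluating $D_{0}$.

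The combinatorial heart of the proof is to identify the pairs $(i,j)$ with $i,j \neq 0$ whose shortest paths in $H_{k,n}$ \emph{all} pass through vertex $0$. I claim these are exactly the \emph{extremal city-tour pairs} straddling $0$, that is, pairs of the form $\bigl(\tfrac{k-1}{2}i,\; n-\tfrac{k-1}{2}j\bigr)$ with $i,j\ge 1$ and $i+j=m$. For each such level $m$ there are $m-1$ unordered pairs, every one with $d(u,v)=m$ in $H_{k,n}$ (the length-$m$ route through $0$ being unique) and $d_{0}(u,v)=m+1$ after deletion; hence each contributes $2\bigl(2^{-m} - 2^{-(m+1)}\bigr) = 2^{-m}$ to $D_{0}$, giving
\begin{equation*}
D_{0} \;=\; \sum_{m=2}^{m_{\max}} \frac{m-1}{2^{m}}.
\end{equation*}
Writing $d := diam(H_{k,n})$, the dichotomy of the theorem is precisely whether the topmost level $m = d$ is included. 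In the critical configuration $n = (k-1)(2d-1)+2$, the $2(k-1)$ vertices at graph distance $d$ from $0$ lie symmetrically about $0$ and the level-$d$ extremal pairs retain a unique through-$0$ shortest path, so $m_{\max}=d$ and identity (\ref{summ1}) yields $D_{0} = 1 - (1+d)/2^{d}$. In every other case the extremal pairs at level $d$ either already sit at strictly smaller graph distance or admit an alternative length-$d$ path avoiding $0$ (via a city-tour detour or via the antipode $n/2$); then $m_{\max}=d-1$ and identity (\ref{sum3}) gives $D_{0} = 1 - d/2^{d-1}$. Substituting into $C_{0} = \frac{n-2}{n}C(H_{k,n}) - D_{0}$ produces the two claimed formulas.

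The main obstacle is establishing the two geometric claims underlying the combinatorial core: (i) for $m \le m_{\max}$ the extremal pair $(u_{i},v_{j})$ has its only length-$m$ path passing through $0$, and (ii) no other pair with $i,j\neq 0$ strictly increases its distance. Claim~(i) requires ruling out every shortcut furnished by the diametric matching $x \leftrightarrow x+n/2$; whether such a shortcut exists at level $d$ is precisely what distinguishes the two cases, and must be settled by an arithmetic argument on $n,k,d$. Claim~(ii) rests on the observation that every other pair possesses at least two shortest paths in $H_{k,n}$, so deleting $0$ cannot invalidate them all --- in particular, pairs involving the antipode $n/2$ are safe because $n/2$ keeps its $k-1$ non-zero neighbors after the deletion. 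Once these assertions are checked by a direct neighborhood analysis on $H_{k-1,n}$ equipped with the extra diametric matching, the algebra reduces to the identities already derived in (\ref{summ1}) and (\ref{sum3}), paralleling the proof of Theorem~\ref{diam2}.
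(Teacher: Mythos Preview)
Your proposal is correct and follows essentially the same route as the paper: use vertex-transitivity to delete vertex~$0$, write $C_{0}=\frac{n-2}{n}C(H_{k,n})-D_{0}$, identify the affected pairs as the extremal city-tour pairs $\bigl(\tfrac{k-1}{2}i,\,n-\tfrac{k-1}{2}j\bigr)$ straddling~$0$, and evaluate $D_{0}=\sum_{m=2}^{m_{\max}}(m-1)2^{-m}$ via identities~(\ref{summ1}) and~(\ref{sum3}) according to whether $m_{\max}=d$ or $d-1$. The paper argues identically, with the same case split on $n=(k-1)(2d-1)+2$; if anything, your write-up is more explicit about the two geometric assertions (uniqueness of the through-$0$ geodesic for the extremal pairs, and invariance of all other distances) that the paper also leaves to the reader.
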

\begin{proof}
As in the previous theorem, the graph is vertex transitive. However, there are two observations in this case after removing a vertex from the graph. Without loss of generality, let the vertex $0$ be deleted from the graph. Thus, similar to the case when $k$ is an even number,%
\begin{eqnarray*}
C_{0} &=&C(H_{k,n})-2\sum\limits_{j\neq 0}\frac{1}{2^{d(0,j)}}-D_{0} \\
&=&\frac{n-2}{n}C(H_{k,n})-D_{0}
\end{eqnarray*}%
where $D_{0}$ denotes changes in closeness value of $C(H_{k,n})$ after
deleting $0$ and also, $C(H_{k,n})=n\sum\limits_{j\neq 0}\frac{1}{2^{d(0,j)}}$ for $k$ is odd, $n$ is even. It can be observed that when $n\neq(k-1)(2.diam(H_{k,n})-1)+2$, we can make city-tours with $diam(H_{k,n})$ moves and visiting first vertex $n/2$ traverse through city-tours with $diam(H_{k,n})$ moves from vertex $0$. The vertex $diam(H_{k,n})(\frac{k-1}{2})$ covered in $ diam(H_{k,n})^{th}$ city-tour at most from both direction. Nevertheless, when $n=(k-1)(2.diam(H_{k,n})-1)+2$, we can reach the vertex $diam(H_{k,n})(\frac{k-1}{2})$ with $diam(H_{k,n})+1$ moves visiting diametrically opposite vertex first. Therefore, we can examine the situation in two cases:

\medskip
Case 1. for $D_{0}$: Let $n\neq(k-1)\cdot(2diam(H_{k,n})-1)+2$. When
vertex $0$ is deleted, the diameter of the graph stays the same. Also, the distances between for each vertex that are $\frac{(k-1)(j-1)}{2}$ \ and the vertices $n-\frac{(k-1)}{2}i$ where $1\leq i\leq (diam(H_{k,n})-j)$ and $2\leq j\leq diam(H_{k,n})-1$ increase by 1. Therefore, changes in closeness value are
\begin{eqnarray*}
  D_{0}&=& \sum\limits_{i=2}^{diam(H_{k,n})-1}\frac{i-1}{2^{i}}\\
  &=&1-diam(H_{k,n})(\frac{1}{2})^{diam(H_{k,n})-1}
\end{eqnarray*}
from equation (\ref{sum3}).

\medskip
    Case 2. for $D_{0}$: Let $n=(k-1)\cdot(2diam(H_{k,n})-1)+2$. When
the vertex $0$ is deleted, the diameter of the graph increases by 1. Also, the distances between for each vertex $(k-1)(j-1)/2$ and the vertices $n-\frac{(k-1)}{2}i$  where $1\leq i\leq
(diam(H_{k,n})-j+1)$ and $2\leq j\leq diam(H_{k,n})$ increase by 1. Hence,
chances in closeness value is
\begin{eqnarray*}
D_{0}&=& \sum\limits_{i=2}^{diam(H_{k,n})}\frac{i-1}{2^{i}}\\
&=&1-(diam(H_{k,n})+1)(\frac{1}{2})^{diam(H_{k,n})})
\end{eqnarray*}
from equation (\ref{summ1}). Thus, the proof is done.\end{proof}

\begin{corollary}
Let $H_{k,n}$ be a Harary Graph on $n$ vertices where $k=3$ and $n$ is
even and \linebreak $diam(H_{k,n})>2.$ Then, residual closeness value of $H_{k,n}$ is
\begin{itemize}
  \item If $n\neq 4diam(H_{k,n})$
\begin{eqnarray*}
  R &=& \frac{n-2}{n}C(H_{k,n})-\frac{3}{2}(1-diam(H_{k,n})(\frac{1}{2})^{(diam(H_{k,n})-1)}).
\end{eqnarray*}
  \item If $n=4diam(H_{k,n})$
\begin{eqnarray*}
  R &=& \frac{n-2}{n}C(H_{k,n})-\frac{3}{2}+\frac{2diam(H_{k,n})+1}{2^{diam(H_{k,n})}}.
   \end{eqnarray*}
\end{itemize}
where $C(H_{k,n})$ closeness value of Harary Graph for $k=3$ and $n$ is
even.
\end{corollary}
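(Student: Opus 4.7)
The plan is to mimic the proof of the preceding theorem, adapting it to the $k=3$ case. Since $k=3$ is odd and $n$ is even, $H_{3,n}$ is vertex-transitive, so the minimum $C_v$ is attained at every vertex and equals $C_0$ for any single choice of deleted vertex $0$. With the same bookkeeping as before, I would write
$$C_0 \;=\; \frac{n-2}{n}\,C(H_{3,n}) \;-\; D_0,$$
where the $(n-2)/n$ factor accounts for the ordered pairs incident to $0$ (using vertex-transitivity to get $\sum_{j \neq 0} 2^{-d(0,j)} = C(H_{3,n})/n$), and $D_0$ collects $2^{-d(i,j)} - 2^{-d'(i,j)}$ over ordered pairs $(i,j)$ with $i,j \neq 0$ whose distance strictly increases in $H_{3,n}\setminus\{0\}$.

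Next I would split into the two subcases exactly as in the previous theorem. The condition $n = 4\,diam(H_{3,n})$ is the specialization to $k=3$ of the earlier threshold $n = (k-1)(2\,diam-1)+2$, and it governs whether $H_{3,n}\setminus\{0\}$ retains diameter $diam(H_{3,n})$ or its diameter grows by $1$. In the non-special case the enumeration of affected pairs stops at distance $diam-1$; in the special case one extra layer at distance $diam$ appears.

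Then I would enumerate which pairs have their distance raised by $1$ when $0$ is removed. Vertex $0$ in $H_{3,n}$ has only three neighbors, $1$, $n-1$, and $n/2$, so the shortest-path structure through $0$ splits into a ``cycle-route'' contribution and a ``diametric-route'' contribution. Counting both strata and weighting by $2^{-i} - 2^{-(i+1)}$, one ends up summing $\sum_{i=2}^{diam-1} (i-1)/2^i$ in the non-special case and $\sum_{i=2}^{diam} (i-1)/2^i$ in the special case, with an overall multiplicative factor of $3/2$ coming from the fact that both the two cycle-incident edges and the diametric edge at $0$ carry essential shortest-path traffic (contrast the $k>3$ case, where one has ``spare'' side-neighbors of $0$ to reroute through). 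Applying equations \eqref{sum3} and \eqref{summ1} then yields
$$D_0 \;=\; \tfrac{3}{2}\bigl(1 - diam(H_{3,n})\cdot 2^{-(diam(H_{3,n})-1)}\bigr)$$
in the generic case and
$$D_0 \;=\; \tfrac{3}{2} \;-\; \bigl(2\,diam(H_{3,n})+1\bigr)\cdot 2^{-diam(H_{3,n})}$$
in the case $n=4\,diam(H_{3,n})$. Plugging these into $C_0 = \tfrac{n-2}{n}C(H_{3,n}) - D_0$ gives the two stated formulas.

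The main obstacle is the correct enumeration of affected pairs in $H_{3,n}\setminus\{0\}$, since with only three edges at $0$ there is little redundancy, and one must verify that no would-be shortest path has an alternative route avoiding $0$ of the same length. In particular, pairs whose unique shortest path in $H_{3,n}$ uses the diametric edge $0\leftrightarrow n/2$ have to be treated together with the cycle pairs $(j-1,\,n-i)$, and showing that their combined contribution really is $3/2$ times the $k>3$ tally (rather than a different constant) is the delicate combinatorial step on which the corollary rests.
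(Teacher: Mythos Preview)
Your setup is right: vertex-transitivity gives $C_0=\frac{n-2}{n}C(H_{3,n})-D_0$, and the case split on $n=4\,\mathrm{diam}$ is the correct specialization of the $k>3$ threshold. But your mechanism for $D_0$ is wrong, and this makes the second case fail.

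The point you miss is that for $k=3$ the affected pairs have their distance raised by \emph{two}, not one. With only three edges at $0$, a geodesic $a\to 0\to b$ between cycle-sides cannot be repaired by a single extra step (there is no spare common neighbour); the cheapest detour goes through the diametric edges and costs two more. The paper's Case~1 enumerates pairs $(i,n-j)$, $1\le i\le \mathrm{diam}-2$, $1\le j\le \mathrm{diam}-i-1$, each jumping by $2$, so the ordered-pair drop is $2\bigl(2^{-d}-2^{-(d+2)}\bigr)=\tfrac{3}{2}\cdot 2^{-d}$; summing gives $D_0=\tfrac{3}{2}\sum_{i=2}^{\mathrm{diam}-1}\frac{i-1}{2^i}$. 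So the $\tfrac{3}{2}$ is $2(1-\tfrac14)$, an algebraic artifact of a $+2$ jump, not a count of ``three essential edges''. In Case~1 your formula coincidentally matches, but for the wrong reason.

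In Case~2 your recipe breaks: $\tfrac{3}{2}\sum_{i=2}^{\mathrm{diam}}\frac{i-1}{2^i}=\tfrac{3}{2}-\tfrac{3(\mathrm{diam}+1)}{2^{\mathrm{diam}+1}}$, which is \emph{not} $\tfrac{3}{2}-\tfrac{2\,\mathrm{diam}+1}{2^{\mathrm{diam}}}$ (try $\mathrm{diam}=3$: you get $\tfrac{3}{4}$ versus the correct $\tfrac{5}{8}$). The paper keeps the $+2$ sum up to $\mathrm{diam}-1$ and then adds a \emph{separate} family of boundary pairs $(i,\,n-\mathrm{diam}+i)$, $1\le i\le \mathrm{diam}-1$, whose distance increases by only $1$; this contributes the extra $\frac{\mathrm{diam}-1}{2^{\mathrm{diam}}}$ needed to reach the stated value. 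Your ``extend the sum by one layer'' idea does not capture this mixed $+2/+1$ structure.
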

\begin{proof} As proof of the previous theorem, when a vertex is removed from the graph, two situations are observed for $k=3$. Without loss of generality, let delete the vertex $0$ from the graph. Therefore, the total closeness value of a graph after removing vertex $0$ is
\begin{eqnarray*}
C_{0} &=&C(H_{k,n})-2\sum\limits_{j\neq 0}\frac{1}{2^{d(0,j)}}-D_{0} \\
&=&\frac{n-2}{n}C(H_{k,n})-D_{0}
\end{eqnarray*}
where $D_{0}$ denotes changes in closeness value of $C(H_{k,n})$ after
deleting $0$ and also $C(H_{k,n})=n\sum\limits_{j\neq 0}\frac{1}{2^{d(0,j)}}
$ for $k=3$, $n$ is even. It can be observed that when $n\neq(4diam(H_{k,n}))$, we can make city-tours with $diam(H_{k,n})$ moves and visiting first vertex $n/2$ traverse through city-tours with $diam(H_{k,n})$ moves from vertex $0$. The vertex $diam(H_{k,n})$ covered in $ diam(H_{k,n})^{th}$ city-tour from both direction. Nevertheless, when $n=(4diam(H_{k,n})$, we can reach the vertex $diam(H_{k,n})$ with $diam(H_{k,n})+1$ moves visiting diametrically opposite vertex first. Therefore, the residual closeness value can be obtained in the following two cases:\\

Case 1. for $D_{0}$: Let $n$ be distinct from $(4diam(H_{k,n}))$. When vertex $0$ is deleted, the distances between for each vertex $i$
and $(n-j)$ where $1\leq i\leq(diam(H_{k,n})-2)$ and $1\leq j\leq diam(H_{k,n})-i-1$ are increased by $2$. Hence, changes in closeness value are
\begin{eqnarray*}
D_{0}&=&2(\sum\limits_{i=2}^{diam(H_{k,n})-1}\frac{i-1}{2^{i}}-\frac{1}{4}\sum\limits_{i=2}^{diam(H_{k,n})-1}\frac{i-1}{2^{i}})\\
&=&\frac{3}{2}\sum\limits_{i=2}^{diam(H_{k,n})-1}\frac{i-1}{2^{i}}\\
&=&\frac{3}{2}(1-diam(H_{k,n})(\frac{1}{2})^{(diam(H_{k,n})-1)}).
\end{eqnarray*}
the result of $D_{0}$ is obtained from equations (\ref{sum3}).
\\

Case 2. for $D_{0}$: Let $n=(4diam(H_{k,n}))$ . When
vertex 0 is deleted, the distances between for each vertices $i$
and $(n-j)$ where $1\leq i\leq(diam(H_{k,n})-1)$ and $1\leq j\leq diam(H_{k,n})-i-1$ are increased by 2. Additionally, distance between vertices $i$ and $n-diam(H_{k,n})+i$, where $1\leq i\leq(diam(H_{k,n})-1)$, is increased by 1. Hence, changes in closeness values are
\begin{eqnarray*}
D_{0}&=&2\cdot[\sum\limits_{i=2}^{diam(H_{k,n})-1}\frac{i-1}{2^{i}}-\frac{1}{4}\sum\limits_{i=2}^{diam(H_{k,n})-1}\frac{i-1}{2^{i}}+(\frac{diam(H_{k,n})-1}{2^{diam(H_{k,n})}}-\frac{diam(H_{k,n})-1}{2^{diam(H_{k,n})+1}}) ]\\
&=& \frac{3}{2}\cdot\sum\limits_{i=2}^{diam(H_{k,n})-1}\frac{i-1}{2^{i}}+\frac{diam(H_{k,n})-1}{2^{diam(H_{k,n})}}\\
&=&\frac{3}{2}\cdot(1-diam(H_{k,n})(\frac{1}{2})^{(diam(H_{k,n})-1)})+\frac{diam(H_{k,n})-1}{2^{diam(H_{k,n})}}\\
&=&\frac{3}{2}-\frac{2diam(H_{k,n})+1}{2^{diam(H_{k,n})}}.
\end{eqnarray*}
the result of $D_{0}$ is obtained from equations (\ref{sum3}).
Thus, the proof is done.
\end{proof}
\begin{theorem} \label{RCkoddnodd}
Let $H_{k,n}$ be a Harary Graph on $n$ vertices where $k>3$ and $n$ are odd
and \linebreak $diam(H_{k,n})>2.$ Then, residual closeness value of $H_{k,n}$
\end{theorem}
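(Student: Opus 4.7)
The plan is to mirror the structure of the two preceding residual-closeness results, keeping the decomposition
\begin{equation*}
C_v = \frac{n-2}{n}\,C(H_{k,n}) - D_v
\end{equation*}
that was used in Theorem \ref{diam2}, where $D_v$ captures the total increase of $\sum_{i<j} 2^{-d(i,j)}$ caused by deleting $v$. The task then splits into (a) identifying the vertex $v^{*}$ that maximises $D_v$, and (b) evaluating $D_{v^{*}}$ in closed form using the closeness value from Theorem \ref{nkodd}.

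For step (a), the complication is that $H_{k,n}$ is not vertex-transitive when $k$ and $n$ are both odd: the central vertex $\tfrac{n-1}{2}$ has degree $k+1$ because of the two diametric edges to $0$ and to $n-1$, while every other vertex has degree $k$. I would argue that $v^{*}$ lies in the generic set $RM$ introduced in the proof of Theorem \ref{nkodd}, on the grounds that removing any of $\tfrac{n-1}{2}$, $0$ or $n-1$ kills a vertex sitting on the diametric chord, leaving more alternative short paths intact and hence producing a strictly smaller $D_v$. By the partial rotational symmetry of $H_{k,n}\setminus\{\tfrac{n-1}{2}, 0, n-1\}$ one may then fix $v^{*}$ to be any convenient representative of $RM$.

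For step (b), I would reuse the city-tour bookkeeping of Theorem \ref{nkodd}: the pairs whose distance strictly increases after deleting $v^{*}$ are exactly those whose shortest routes all passed through $v^{*}$, and they line up in the same triangular pattern $\bigl(\tfrac{k-1}{2}(j-1),\; n-\tfrac{k-1}{2}i\bigr)$ already analysed in the $k$-even and ($k$-odd, $n$-even) cases. Each such pair contributes one unit of distance, so $D_{v^{*}}$ again reduces to a sum of the form $\sum (i-1)/2^{i}$ that has already been evaluated in closed form at (\ref{summ1}) and (\ref{sum3}). As in the preceding corollary I expect two sub-cases, depending on whether the removal of $v^{*}$ strictly increases $diam(H_{k,n})$; the boundary sub-case should correspond to a specific congruence of $n$ modulo $2(k-1)$, analogous to the condition $n=(k-1)(2\,diam(H_{k,n})-1)+2$ in the $n$-even theorem.

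The hard part will be verifying that $v^{*}\in RM$ is genuinely optimal, and controlling the interaction between the deletion and the diametric chord. Because deleting a generic $v^{*}\in RM$ does not destroy the chord between $\tfrac{n-1}{2}$ and $\{0,n-1\}$, some shortest paths previously routed through $v^{*}$ may now be replaced by paths that cross the chord, yielding fewer distance increases than the naive count; conversely, a small number of pairs that did not use $v^{*}$ before may be rerouted through it in the original graph, inflating $D_{v^{*}}$ by a boundary term. Separating these contributions cleanly, and then matching the resulting expression to the congruence conditions governing $diam(H_{k,n})$ in the $(k,n)$ both-odd case, is where the real bookkeeping of the proof will lie.
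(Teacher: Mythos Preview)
Your proposal has two substantive errors, one in the decomposition and one in the choice of optimal vertex.

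First, the identity $C_v=\frac{n-2}{n}C(H_{k,n})-D_v$ that you carry over from Theorem \ref{diam2} relies on vertex-transitivity: it is nothing but $C_v=C(H_{k,n})-2C(v)-D_v$ together with $C(v)=\frac{1}{n}C(H_{k,n})$. In the $(k,n)$ both-odd case the graph is \emph{not} vertex-transitive, and $C(v)$ genuinely depends on $v$ (see equations (\ref{eq1})--(\ref{eq3})). The correct decomposition here is $C_v=C(H_{k,n})-2C(v)-D_v$, and both factors in $2C(v)+D_v$ must be compared across the different vertex types.

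Second, and more seriously, your identification of $v^{*}$ is backwards. The central vertex $\tfrac{n-1}{2}$ has degree $k+1$ and carries \emph{both} diametric chords (to $0$ and to $n-1$). Deleting it therefore destroys every shortcut across the circle, so it produces the \emph{largest} $D_v$, not the smallest: the paper's computation gives $D_{(n-1)/2}=3\sum_{i=2}^{diam-1}\frac{i-1}{2^i}$, while a vertex in $RM$ yields only $D_v=\sum_{i=2}^{diam-1}\frac{i-1}{2^i}$, precisely because the chord survives and absorbs many of the rerouted paths. On top of that, $C(\tfrac{n-1}{2})$ is the maximum of all $C(v)$ by (\ref{eq1}). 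Hence $2C(v)+D_v$ is maximised at $v=\tfrac{n-1}{2}$, and this is the vertex whose removal gives the residual closeness. The paper checks all three vertex types ($\tfrac{n-1}{2}$; $\tfrac{n-1}{2}\pm i$ for $1\le i\le\tfrac{k-1}{2}$; and $RM$) explicitly and arrives at
\[
R=C(H_{k,n})-2C\bigl(\tfrac{n-1}{2}\bigr)-3\Bigl(1-diam(H_{k,n})\bigl(\tfrac{1}{2}\bigr)^{diam(H_{k,n})-1}\Bigr),
\]
with no case split on whether the diameter increases. Your anticipated two sub-cases and the search inside $RM$ would lead to the wrong answer.
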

\begin{equation*}
R=C(H_{k,n})-3k-2+(\frac{4k-3-t+3diam(H_{k,n})}{2^{diam(H_{k,n})-1}}).
\end{equation*}%
where $(n-k-1)\equiv t (\bmod\ (2(k-1)))$ and $C(H_{k,n})$ is closeness
value of Harary Graph for $k$ and $n$ is odd.

\begin{proof}
In this case, graph is not vertex transitive. We want to get a minimum
closeness value of $\sum\limits_{\substack{ i\neq v, \\ \forall i\in
V(H_{k,n})}}C_{v}(i)$ when vertex $v$ removed from the graph. Let us consider
the closeness values after removing the vertex according to the types.\\

Observation 1. If $v=\frac{n-1}{2}$ removed from the graph, then
\begin{equation*}
C_{v}(i)=C(H_{k,n})-2C(v)-D_{v}
\end{equation*}%
where $D_{v}$ denotes changes in the closeness value when $v=\frac{n-1}{2}$
removed from the graph. When vertex $v$ is deleted, the distances of
vertices $\frac{(n-1)}{2}-(\frac{k-1}{2})i$ and $\frac{(n-1)}{2}+(\frac{k-1}{%
2})i$ to vertices $(\frac{k-1}{2})j$, $n-(\frac{k-1}{2})j-1$ increased by $1$%
, respectively where $1\leq i\leq (diam(H_{k,n})-2)$ and $0\leq j\leq
(diam(H_{k,n})-(i+2))$. In addition, the distances between vertices $\frac{%
(n-1)}{2}-(\frac{k-1}{2})i$ and $\frac{(n-1)}{2}+(\frac{k-1}{2})j$ where $%
1\leq i\leq (diam(H_{k,n})-2)$ and $1\leq j\leq (diam(H_{k,n})-(i+1))$ are also increased by $1$. Therefore, changing is expressed as $\sum%
\limits_{i=2}^{diam(H_{k,n})-1}\frac{i-1}{2^{i}}.$  If we consider these
three kinds of situations then the total closeness value $6\sum\limits_{i=2}^{diam(H_{k,n})-1}\frac{i-1}{2^{i}}$
modifies to $\frac{1}{2}\cdot6\sum\limits_{i=2}^{diam(H_{k,n})-1}\frac{i-1}{2^{i}%
}$. Therefore, the changes in closeness value can be expressed as $ \\
D_{v}=3\sum\limits_{i=2}^{diam(H_{k,n})-1}\frac{i-1}{2^{i}}.$ Hence we have,\vspace*{-3mm}
\begin{equation*}
C_{v}(i)=C(H_{k,n})-2C(\frac{n-1}{2})-3\sum\limits_{i=2}^{diam(H_{k,n})-1}%
\frac{i-1}{2^{i}}
\end{equation*}%
Using equation (\ref{sum3}), we get
\begin{equation}
=C(H_{k,n})-2C(\frac{n-1}{2})-3(1-diam(H_{k,n})(\frac{1}{2}%
)^{diam(H_{k,n})-1}).  \label{resn-1/2}
\end{equation}

Observation 2. If one of the $v=\frac{n-1}{2}\pm i,$ $1\leq i\leq \frac{k-1}{%
2}$ vertices removed from the graph then two situations will be appeared.
\begin{equation*}
C_{v}(i)=C(H_{k,n})-2C(v)-D_{v}
\end{equation*}%
where $D_{v}$ denotes changes in the closeness value when the vertex $v$ is
removed from the graph. This observation can be separated into two cases:

\medskip
Case 1.  When one of the $v=\frac{n-1}{2}\pm i,~1\leq i\leq \frac{k-3}{2}$
removed from the graph:\\ Assume that for $i=1$ the corresponding vertex $v=%
\frac{n-1}{2}+1=\frac{n+1}{2}$ is deleted. After the vertex $v=\frac{n+1}{2}$
is removed from the graph;
the distances between vertices $v+(\frac{k-1}{2})m$ and $v-(\frac{k-1}{%
2})j$ which are connected with $2,3,...,(diam(H_{k,n})-1)$ city-tour, where $%
1\leq m\leq diam(H_{k,n})-2$ and $1\leq j\leq diam(H_{k,n})-(m+1),$ are
increased by $1$. Therefore, the value $\sum\limits_{i=2}^{diam(H_{k,n})-1}%
\frac{i-1}{2^{i}}$ is half-modified.

\medskip
    Analogously, the distance between vertices $v-(\frac{k-1}{2})m$ and $%
(v+\frac{n-1}{2})_{\mod(n-1)}+(\frac{k-1}{2})j$ where $1\leq m\leq
diam(H_{k,n})-2$ and $0\leq j\leq diam(H_{k,n})-(m+1)$ are increased by $1.$
Here, the vertex $(v+\frac{n-1}{2})_{\mod(n-1)}$ is diametrically
opposite vertex of $v$. Therefore, the value $\sum%
\limits_{i=2}^{diam(H_{k,n})-1}\frac{i-1}{2^{i}}$ is half-modified. This
implies that
\begin{equation}
C_{v}(i)=C(H_{k,n})-2C(\frac{n+1}{2})-2\sum%
\limits_{i=2}^{diam(H_{k,n})-1}\frac{i-1}{2^{i}}.  \label{resn-1/2+i}
\end{equation}%
The same modification should be considered for the reverse order of vertex
pairs that have increased the distance between them by $1$. Therefore, the
coefficient of variation was multiplied by $4$. Equation (\ref{sum3}) can be
substituted as $\sum\limits_{i=2}^{diam(H_{k,n})-1}\frac{i-1}{2^{i}}%
=(1-diam(H_{k,n})(\frac{1}{2})^{diam(H_{k,n})-1})$ in equation (\ref%
{resn-1/2+i}).%
\begin{eqnarray}\label{14}
C_{v}(i)&=&C(H_{k,n})-2C(\frac{n+1}{2})-2(1-(diam(H_{k,n})\cdot(\frac{1}{2})^{diam(H_{k,n})-2}))\nonumber.
\end{eqnarray}

\medskip
Case 2. When one of the $v=\frac{n-1}{2}\pm \frac{k-1}{2}$ removed from the
graph:\\
 Assume that the vertex $v=\frac{n-1}{2}-\frac{k-1}{2}$ is deleted.
After the vertex $v$ is removed from the graph, the distances between vertices $v+(\frac{k-1}{2})m$ and $v-(\frac{k-1}{2})j$
which are connected with $2,3,...,(diam(H_{k,n})-1)$ city-tour, where $%
1\leq m\leq diam(H_{k,n})-2$ and $1\leq j\leq diam(H_{k,n})-(m+1),$
are increased by~$1.$ Therefore, the value $\sum\limits_{i=2}^{diam(H_{k,n})-1}%
\frac{i-1}{2^{i}}$ is half-modified.

\medskip
    Analogously, the distance between vertices $v-(\frac{k-1}{2})m$ and $%
(v+\frac{n-1}{2})_{\mod(n-1)}+(\frac{k-1}{2})j$, where $1\leq m\leq
diam(H_{k,n})-3$ and $0\leq j\leq diam(H_{k,n})-(m+2)$, is increased by $1.$
Therefore, the value $\sum\limits_{i=2}^{diam(H_{k,n})-1}\frac{i-1}{2^{i}}$
is half-modified. This implies that; the distance between vertices $v+\frac{%
k-1}{2}$ and $v-\frac{k-1}{2}$ is increased by $1.$ Since the distance between the vertex $v-(\frac{k-1}{2})$ and vertex $0$ is three before removing the vertex $v.$ This implies that
\begin{eqnarray}\label{res++i}
C_{v}(i)=C(H_{k,n})-2C(\frac{n-1}{2}-\frac{k-1}{2})-\sum\limits_{i=2}^{diam(H_{k,n})-1}\frac{i-1}{2^{i}}-\sum\limits_{i=3}^{diam(H_{k,n})-1}\frac{i-1}{2^{i}}
\end{eqnarray}
Using geometric sum formula (\ref{geo}) and its derivative (\ref{derigeo}), the equation (\ref{res++i}) evaluated as in equation (\ref{sum3})%
\begin{eqnarray}\label{res+-i}
C_{v}(i)&=&C(H_{k,n})-2C(\frac{n-1}{2}-\frac{k-1}{2})-(1-(diam(H_{k,n})(\frac{1}{2})^{diam(H_{k,n})-2}))+\frac{1}{4}.
\end{eqnarray}

Observation 3. If $v\in RM$ removed from graph, then
\begin{equation*}
C_{v}(i)=C(H_{k,n})-2C(v)-D_{v}
\end{equation*}%
where $D_{v}$ denotes changes in the closeness value when $v$ is removed
from the graph. Let the vertex $0$ is removed from the graph. The distance
between vertices $\frac{k-1}{2}m$ and $(\frac{k-1}{2})_{\mod n}+(\frac{%
k-1}{2})j$ where $1\leq m\leq diam(H_{k,n})-2$ and $1\leq j\leq
diam(H_{k,n})-(m+1),$ is increased by $1.$ This implies that
\begin{equation*}
C_{v}(i)=C(H_{k,n})-2C(0)-\sum\limits_{i=2}^{diam(H_{k,n})-1}\frac{i-1}{2^{i}}
\end{equation*}%
Using summation
(\ref{sum3}), the following equation can be expressed as
\begin{equation}\label{resn-1/2+iii}
=C(H_{k,n})-2C(0)-(1-(diam(H_{k,n})(\frac{1}{2})^{diam(H_{k,n})-1}))
\end{equation}%
for any vertex $v$ in the set of $RM.$  In order to get minimum residual
value of $H_{k,n}$ when $k$ and $n$ are odd, it is needed to compare
closeness value of vertices labeled by $\frac{n-1}{2},$ $\frac{n-1}{2}\pm i
$ , for $1\leq i\leq \frac{k-1}{2}$ and remaining vertices in the set $RM$ first. We have the closeness value of a vertices from equations
(\ref{eq1}),
(\ref{eq2})%
, and
(\ref{eq3})
where $(n-k-1)\equiv t \mod 2(k-1).$  It can be said from proof of Theorem \ref{nkodd} that $C(\frac{n-1}{2})$ is the maximum closeness value
among other closeness values of vertices. In order to obtain minimum value
after a vertex is removed from the graph, let us compare the values of equations
(\ref{resn-1/2}),
(\ref{14}),
(\ref{res+-i}),
and (\ref{resn-1/2+iii}).
%
If we compare the change values after removing the vertex, denoted by $D_{v},$
$C_{v}(\frac{n-1}{2})$ corresponds to residual closeness value due to the
maximum closeness value comes from $C(\frac{n-1}{2}).$  Therefore, we get
\begin{equation*}
R=C(H_{k,n})-2C(\frac{n-1}{2})-3(1-diam(H_{k,n})(\frac{1}{2}%
)^{diam(H_{k,n})-1})
\end{equation*}%
where
\begin{eqnarray*}
C(\frac{n-1}{2})&=&\frac{k+1}{2}+(k-1)(1-(\frac{1}{2}%
)^{diam(H_{k,n})-2})+\frac{t-1}{2^{diam(H_{k,n})}}.
\end{eqnarray*}
Hence, we obtain the residual value simply as%
\begin{equation*}
R=C(H_{k,n})-3k-2+(\frac{4k-3-t+3diam(H_{k,n})}{2^{diam(H_{k,n})-1}}).
\end{equation*}%
Then, the proof is completed.
\end{proof}
If $k = 3$, this case will be considered separately due to the distinct changes in distances when a vertex is removed from the graph. Relying on the proof of Theorem \ref{RCkoddnodd}, the smallest residual value for the special case of $k = 3$ will also be obtained by removing the vertex $v = \frac{n-1}{2}$. The result for the case $k = 3$ is as follows:
\begin{corollary}\label{corollary3.6}
	Let $H_{k,n}$ be a Harary Graph on $n$ vertices where $k=3$ and $n$ are odd, $diam(H_{k,n})>3$, and $(n-4)\equiv t (\bmod\ 4)$ Then, residual closeness value of $H_{k,n}$:\\	
		If $t= 3$, then
	\begin{equation*}
		R=C(H_{k,n})-11,75+\frac{8diam(H_{k,n})+11}{2^{diam(H_{k,n})}}.
	\end{equation*}
 If $t= 1$, then
	\begin{equation*}
		R=C(H_{k,n})-11,75+\frac{9 diam(H_{k,n})+13}{2^{diam(H_{k,n})}}.
	\end{equation*}%
		where $C(H_{k,n})$ is closeness value of Harary Graph for $k$ and $n$ is odd.
	\end{corollary}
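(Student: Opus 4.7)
The plan is to mirror the approach of Theorem~\ref{RCkoddnodd} but specialize to the case $k=3$, where the Harary graph is built from the cycle $H_{2,n}$ by adding the diametrically opposite edges $\{i,\ i+(n-1)/2\}$ for $0\leq i\leq (n-1)/2$. Guided by the remark preceding the corollary, I will argue (following the comparison step at the end of the proof of Theorem~\ref{RCkoddnodd}) that the maximum closeness $C(v)$ is still attained at $v=(n-1)/2$, so the minimum of $C_v$ is also attained there. Thus I set $v=(n-1)/2$ and work from
\begin{equation*}
    C_v \;=\; C(H_{k,n}) - 2\,C(v) - D_v,
\end{equation*}
where $D_v$ is the total decrease of $\sum 1/2^{d(i,j)}$ over ordered pairs $(i,j)$ with $i,j\neq v$ caused by the lengthening of shortest paths once $v$ is deleted.

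Next I would catalog the pairs of vertices whose distance strictly increases after removing $v$. Because each vertex has only three neighbors when $k=3$ (namely its two cycle-neighbors and its diametric partner), many shortest paths between vertices on opposite sides of $v$ were forced through $v$. I expect to recover the same family of increments already seen in the proof of Theorem~\ref{RCkoddnodd} --- sums of the form $\sum_{i=2}^{diam(H_{k,n})-1}(i-1)/2^i$ coming from pairs $v\pm(\tfrac{k-1}{2})m$ versus $v\pm(\tfrac{k-1}{2})j$ on both sides --- but with additional end-corrections caused by the diametric edge $(n-1)/2 \leftrightarrow 0$ now vanishing. These corrections are what I anticipate producing the shift $-0.75$ relative to the general formula (so that $3k+2 = 11$ becomes $11{,}75$) and the modified numerators $8\,diam(H_{k,n})+11$ or $9\,diam(H_{k,n})+13$ appearing in the statement.

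The residue $t = (n-4) \bmod 4 \in \{1,3\}$ (forced by $n$ odd) decides how the final city-tour closes and therefore how many vertices lie at exact distance $diam(H_{k,n})$ from $v$. I will handle $t=3$ and $t=1$ in parallel cases: for each, I list the affected pairs, apply the closed form $\sum_{i=2}^{h}(i-1)/2^{i}= 1-(h+1)/2^{h}$ that follows from (\ref{geo}) and (\ref{derigeo}) at $X=1/2$, and then add the small number of end-corrections from the missing diametric edges. Substituting $C((n-1)/2)$ from equation~(\ref{eq1}) with $k=3$ and simplifying should yield the two stated formulas.

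The hard part will not be the algebra --- the geometric and derivative-of-geometric series identities discharge the summations mechanically --- but rather the combinatorial bookkeeping of which pairs lose their shortest path when $v=(n-1)/2$ is removed in the $k=3$ graph. Because degrees are so small, some distances can increase by more than $1$, and the asymmetry between $t=1$ and $t=3$ shifts the last affected band by one index. Verifying the count on a small case such as $n=13$ (so $t=1$) and $n=15$ (so $t=3$) by hand before writing the general enumeration should catch any off-by-one errors.
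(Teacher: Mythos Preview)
Your plan matches the paper's proof essentially step for step: the paper also removes $v=(n-1)/2$, writes $C_v=C(H_{k,n})-2C(v)-D_v$, splits into the cases $t=3$ and $t=1$, and computes $D_v$ by separating the pairs $\{\frac{n-1}{2}\pm i,\ j\}$ and $\{\frac{n-1}{2}\pm i,\ n-j-1\}$ (whose distances increase by $1$, contributing $2\sum_{i=2}^{diam-1}(i-1)/2^i$) from the pairs $\{\frac{n-1}{2}-i,\ \frac{n-1}{2}+j\}$ (whose distances jump by more than $1$, yielding a before/after comparison that produces the extra $\frac{15}{4}$ coefficient and the $t$-dependent correction terms). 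The bookkeeping you flag as ``the hard part'' is exactly what the paper carries out explicitly, and your anticipation that distances can increase by more than $1$ when $k=3$ is the crucial point distinguishing this corollary from Theorem~\ref{RCkoddnodd}.
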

\begin{proof}
	In this case, graph is not vertex transitive. We want to get minimum
	closeness value of $\sum\limits_{\substack{ i\neq v, \\ \forall i\in
			V(H_{k,n})}}C_{v}(i)$ when vertex $v = \frac{n-1}{2}$ removed from the graph as in the proof of Theorem \ref{RCkoddnodd}. Additionally, we will examine the case for $k = 3$ in two different scenarios based on whether the value of $t$  is equivalent to 1 or 3:
		
\medskip
	Case 1. Let $(n-4)\equiv 3 (\bmod\ 4)$
	\begin{equation*}
		C_{v}(i)=C(H_{k,n})-2C(v)-D_{v}
	\end{equation*}%
	where $D_{v}$ denotes changes in the closeness value when $v=\frac{n-1}{2}$ removed from the graph. When vertex $v$ is deleted, the distances of vertices $\frac{(n-1)}{2}-i$ and $\frac{(n-1)}{2}+i$ to the vertices $j$, $n-j-1$ is increased by $1$  where $1\leq i\leq (diam(H_{k,n})-2)$ and $0\leq j\leq	(diam(H_{k,n})-(i+2))$. Therefore, the total closeness value before vertex removal can be expressed as $\sum%
	\limits_{i=2}^{diam(H_{k,n})-1}\frac{i-1}{2^{i}}$ for each indicated vertex pairs. If we consider these four kinds of situations then total change between vertices can be calculated as follows:
	
\medskip
Total closeness value $(4\sum\limits_{i=2}^{diam(H_{k,n})-1}\frac{i-1}{2^{i}})$
	modifies to  $(\frac{1}{2}\cdot4\sum\limits_{i=2}^{diam(H_{k,n})-1}\frac{i-1}{2^{i}})$.  Therefore, the \linebreak
total changes in closeness value can be expresses as 	$(2\sum\limits_{i=2}^{diam(H_{k,n})-1}\frac{i-1}{2^{i}}).$\\

In addition, the distances between vertices $\frac{	(n-1)}{2}-i$ and $\frac{(n-1)}{2}+j$ where $1\leq i\leq (diam(H_{k,n})-2)$ and $1\leq j\leq (diam(H_{k,n})-(i+1))$ also increases. The changing can be evaluated as follows:\\

Before removal, the total closeness value between indicated vertex pairs can be expressed as $(2\sum\limits_{i=2}^{diam(H_{k,n})-1}\frac{i-1}{2^{i}})$ and after removal, the total closeness value between indicated vertex pairs can expressed as $(2\cdot(\sum\limits_{i=2}^{diam(H_{k,n})-2}\frac{i-1}{2^{i+3}}+\frac{diam(H_{k,n})-2}{2^{diam(H_{k,n})+1}}))$.
	 Thus,  the  $D_{v}=(\frac{15}{4}\sum\limits_{i=2}^{diam(H_{k,n})-1}\frac{i-1}{2^{i}}) -$  $(\frac{diam(H_{k,n})-2}{2^{diam(H_{k,n})+1}})$ value is obtained.
	
\medskip
	 Hence, using equation (\ref{sum3})
	\begin{equation*}
		C_{v}(i)=C(H_{k,n})-2C(\frac{n-1}{2})-(\frac{15}{4}-\frac{16 diam(H_{k,n})-2}{2^{(diam(H_{k,n})+1)}})
	\end{equation*}%
	Substituting the equation (\ref{eq1}) of the form $k=3$ and $t=3$, we obtain the residual closeness value as:
	\begin{eqnarray}
		=C(H_{k,n})-11,75+\frac{8diam(H_{k,n})+11}{2^{diam(H_{k,n})}}.
	\end{eqnarray}
	
\smallskip
Case 2. Let $(n-4)\equiv 1 (\bmod\ 4)$
	\begin{equation*}
		C_{v}(i)=C(H_{k,n})-2C(v)-D_{v}
	\end{equation*}%
	where $D_{v}$ denotes changes in the closeness value when $v=\frac{n-1}{2}$ removed from the graph. When vertex $v$ is deleted, as in Case 1, the distances of vertices $\frac{(n-1)}{2}-i$ and $\frac{(n-1)}{2}+i$ to the vertices $j$, $n-j-1$ are increased by $1$  where $1\leq i\leq (diam(H_{k,n})-2)$ and $0\leq j\leq	(diam(H_{k,n})-(i+2))$. Therefore, the total closeness value before vertex removal can be expressed as $\sum%
	\limits_{i=2}^{diam(H_{k,n})-1}\frac{i-1}{2^{i}}$ for each indicated vertex pairs. If we consider these four kinds of situations then total change between vertices can be calculated as follows:\\
	
	Total closeness value between indicated pairs $(4\sum\limits_{i=2}^{diam(H_{k,n})-1}\frac{i-1}{2^{i}})$
	modify to \\  $(\frac{1}{2}\cdot4\sum\limits_{i=2}^{diam(H_{k,n})-1}\frac{i-1}{2^{i}%
	})$. Therefore, the total changes in closeness value can be expressed as \linebreak 	$(2\sum\limits_{i=2}^{diam(H_{k,n})-1}\frac{i-1}{2^{i}}).$\\
	
	In addition, the distances between vertices $\frac{	(n-1)}{2}-i$ and $\frac{(n-1)}{2}+j$ where $1\leq i\leq (diam(H_{k,n})-2)$ and $1\leq j\leq (diam(H_{k,n})-(i+1))$ also increases. The changing can be evaluated as follows:\\
	
	Before removal, the total closeness value between indicated vertex pairs can be expressed as $(2\sum\limits_{i=2}^{diam(H_{k,n})-1}\frac{i-1}{2^{i}})$, and after removal, the total closeness value between indicated vertex pairs can expressed as $(2\cdot(\sum\limits_{i=2}^{diam(H_{k,n})-2}\frac{i-1}{2^{i+3}}+\frac{diam(H_{k,n})-2}{2^{diam(H_{k,n})}}))$.
	Thus,   $D_{v}=(\frac{15}{4}\sum\limits_{i=2}^{diam(H_{k,n})-1}\frac{i-1}{2^{i}}) \; -$ \linebreak $(\frac{3diam(H_{k,n}-6)}{2^{diam(H_{k,n})+1}})$ value is obtained.
	
\medskip
	Hence, using equation (\ref{sum3}), we get
	\begin{equation*}
		C_{v}(i)=C(H_{k,n})-2C(\frac{n-1}{2})-(\frac{15}{4}-\frac{9 diam(H_{k,n})-3}{2^{diam(H_{k,n})}})
	\end{equation*}
	Substituting the equation (\ref{eq1}) of the form $k=3$ and $t=1$, we obtain the residual closeness value as:
	\begin{equation}
		R=C(H_{k,n})-11,75+\frac{9 diam(H_{k,n})+13}{2^{diam(H_{k,n})}}.
	\end{equation}

\vspace*{-3mm}
\end{proof}

\begin{remark}
	Due to the constraints established in the proof of Corollary \ref{corollary3.6}, the diameter value was considered to be greater than $3$. Thus, for odd values of $n$ and $k=3$, residual closeness values for diameter values $2$ and $3$ were not formulated. To ensure completeness in the calculations, we can provide the residual closeness values for the remaining four specific small Harary Graph structures as follows: $R(H_{3,5})=5, R(H_{3,7})=11, R(H_{3,9})=17.5, R(H_{3,11})=24.875$.
	\end{remark}

\end{document}